\tikzstyle{arrowstyle} = [->,>=stealth]
\tikzstyle{biarrowstyle} = [<->,>=stealth]
\colorlet{mygolden}{teal!10}
\colorlet{myblue}{blue}
\colorlet{mygreen}{teal}
\setlist[enumerate]{leftmargin=.5in}
\setlist[itemize]{leftmargin=.5in}
\newcommand{\keyw}[1]{\emph{#1}}
\newdefinition{definition}{Definition}
\newdefinition{example}{Example}
\newdefinition{remark}{Remark}
\newtheorem{mainprob}{Main problem}
\newtheorem{theorem}{Theorem}
\newtheorem{lemma}[theorem]{Lemma}
\newtheorem{proposition}{Proposition}
\newdefinition{corollary}{Corollary}
\pgfplotsset{
width=6cm, height=5cm, 
compat=1.3,
legend style={font=\scriptsize},
legend image post style={scale=1.0}
}
\pgfplotsset{
    colormap={slategraywhite}{
        rgb255=(112,128,144)
        rgb255=(255,159,101)
    },
    colormap={tealy}{
        color=(teal!50)
        color=(black)
    },
    colormap name=tealy
}
\begin{document}

\begin{frontmatter}

\title{Exact hierarchical reductions of dynamical models\\ via linear transformations}

\author[1]{Alexander Demin\corref{cor1}}
\ead{alexander.demin.eternal@gmail.com}

\author[1]{Elizaveta Demitraki}
\ead{egdemitraki@edu.hse.ru}

\author[2]{Gleb Pogudin}
\ead{gleb.pogudin@polytechnique.edu}

\cortext[cor1]{Corresponding author}

\affiliation[1]{
organization={National Research University, Higher School of Economics}, 
city={Moscow}, 
country={Russia}}

\affiliation[2]{
organization={LIX, CNRS, \'Ecole Polytechnique, Institute Polytechnique de Paris},
city={Palaiseau},
country={France}}

\begin{abstract}
Dynamical models described by ordinary differential equations (ODEs) are a fundamental tool in the sciences and engineering. Exact reduction aims at producing a lower-dimensional model in which each macro-variable can be directly related to the original variables, and it is thus a natural step towards the model's formal analysis and mechanistic understanding.
We present an algorithm which, given a polynomial ODE model, computes a longest possible chain of exact linear reductions of the model such that each reduction refines the previous one, thus giving a user control of the level of detail preserved by the reduction.
This significantly generalizes over the existing approaches which compute only the reduction of the lowest dimension subject to an approach-specific constraint.
The algorithm reduces finding exact linear reductions to a question about representations of finite-dimensional algebras.
We provide an implementation of the algorithm, demonstrate its performance on a set of benchmarks, and illustrate the applicability via case studies.
Our implementation is freely available at~\url{\repository}.
\end{abstract}

\begin{keyword}
ordinary differential equations \sep exact reduction \sep lumping \sep dimensionality reduction \sep matrix algebras 
\MSC{34C20, 34-04, 16G10}
\end{keyword}

\end{frontmatter}

\section{Introduction}
Ordinary Differential Equations (ODEs) provide a powerful and expressive language for describing systems evolving in real-time and, thus, are widely used both in the sciences and engineering.
This motivates development of formal methods to analyse the structure of models defined using~ODEs.
One important problem which has been studied actively in the past decade from this angle is \emph{model reduction}~\cite{FeretPNAS,feret2,erode,Cardelli2017a,clue}.

In general, model reduction refers to a variety of techniques aiming at replacing the model of interest with a lower-dimensional one which allows to analyze the dynamics of the original model.
Traditionally, approximate methods such as, e.g., balanced truncation~\cite{antoulas} have been employed.
\emph{Exact model reduction} is a complementary approach in which one lowers the dimension of the model without introducing approximation errors. 
Because of their exactness such reductions are typically connected to the system structure and are, thus, of particular interest in the context of performing formal analysis or deriving mechanistic insights.
Classical examples of exact model reductions include reductions using conservation laws or reductions based on symmetries~\cite{Hubert2013,Camporesi2011,Olver1995}.
In principle, exact transformations can be useful to reduce different measures of complexity of a model besides model dimension.
For example, exact transformations may be used to convert a model to a linear one~\cite{Sankaranarayanan2011}.
Such transformations are beyond the scope of the present paper.

Algorithms for finding exact dimension reductions may be applicable directly to the ODE systems as, for example, the one proposed in the present paper, or to a domain-specific description of a model (which can be then translated to an ODE system).
Rule-based modeling~\cite{Faeder2005,Danos} is a prominent example of such concise domain-specific descriptions for which powerful reduction algorithms have been devised~\cite{FeretPNAS,Camporesi2011,Camporesi2013} and implemented~\cite{feret:CMSB2017}.
Compared to the ODE-based algorithms considered in this paper, algorithms that rely on a rule-based representation can deal with very large models but are applicable only to a class of ODE systems and search only for a class of reductions (via so-called fragments~\cite{FeretPNAS}, see also~\Cref{rem:rules}).

In this paper, we will focus on an important class of exact reductions of ODE models, exact linear lumpings, which correspond to finding a self-contained system of differential equations for a set of \emph{macro-variables} such that each macro-variable is a linear combination of the original variables.
The case when the macro-variables are sought as sums of the original variables has received significant attention, see e.g.~\cite{FeretPNAS,Cardelli2019,feret2, Cardelli2017a}.
In particular, \texttt{ERODE} software has been developed~\cite{erode} which efficiently finds the optimal subdivision of the variables into macro-variables. 
\texttt{CLUE} package~\cite{clue,JimnezPastor2022} was a step towards 
relaxing
these restrictions on the macro-variables.
Compared to the earlier approaches, the macro-variables found by \texttt{CLUE} may involve arbitrary coefficients (not just zeroes and ones as before) and also allow the same variable to appear in several macro-variables at once. Consequently, the dimension of the reduced model in CLUE could be significantly lower~\cite[Table~1]{clue}.
However, the input of the algorithm consisted not only of a model but also of linear functions in the state variables to be preserved (the observables).
Such a set of observables may or may not be available, and guessing it correctly is crucial for finding low-dimensional reductions.
In this paper, we aim at taking the best from both worlds: requiring only a model as input (as \texttt{ERODE}) and allowing the macro-variables to be any linear combinations of the original variables (as \texttt{CLUE}).

The main result of the paper is an algorithm for finding arbitrary exact linear reductions when given only a polynomial ODE model with rational coefficients.
Note that the question of finding such an arbitrary linear lumping of the lowest possible dimension may not be the most meaningful one since any linear first integral yields a reduction of dimension one with constant dynamics.
Instead, our algorithm builds a hierarchy of reductions, more precisely it \emph{finds a longest possible chain of lumpings} in which each reduction refines the next one (for details, see~\Cref{sec:problem}) so that a user can choose the desired level of details to be preserved by reduction by moving along the hierarchy and may find reductions which would likely be missed by~\texttt{ERODE} and~\texttt{CLUE} (e.g., see~\Cref{ex:clueerode,sec:ex_celldeath}).
Such generality comes with a price: our software is typically slower than \texttt{CLUE} and \texttt{ERODE}.

Our algorithm is based on combining the connection of the linear lumping problem to the problem of finding a common invariant subspace of a set of matrices~\cite{LiRabitz,clue} with the structure theory of finite-dimensional algebras.
We use the general framework of existing algorithms over finite~\cite{RONYAI} and algebraically closed~\cite{CGK,speyer} fields and achieve desired efficiency by

\begin{itemize}
    \item sparsity-aware algorithm for finding a basis of an algebra (\Cref{sec:generating-algebra});
    \item exploiting the structure of the input to compute mostly with rational numbers and postponing passing to algebraic number fields as much as possible;
    \item using sparse linear algebra and modular computation 
    to deal with large matrices and expression swell, respectively.
\end{itemize}

We implemented our algorithm, and the implementation is publicly available at~\url{\repository}.
We evaluate its performance on a set of benchmarks from the BioModels database~\cite{BioModels2020}, a large collection of models from life sciences, and demonstrate the produced reduction for two case studies.

The rest of the paper is organized as follows.
In \Cref{sec:problem}, we give a precise definition of exact linear reduction, and formulate explicitly the algorithmic problem we solve in the paper. 
\Cref{sec:algo} contains a detailed description of the algorithm and its justification.
We describe our implementation and report its performance in~\Cref{sec:performance}.
\Cref{sec:examples} contains the case studies describing the reductions produced by our software.


\section{Problem statement}\label{sec:problem}
In the paper, the transpose of a matrix $M$ is denoted by $M^T$.
For a vector $\mathbf{x} = (x_1, \ldots, x_n)$ of indeterminates, by $\mathbb{C}[\mathbf{x}]$ (resp., $\mathbb{R}[\mathbf{x}]$, $\mathbb{Q}[\mathbf{x}]$) we will denote the set of polynomials in $\mathbf{x}$ with complex (resp., real, rational) coefficients.

\begin{definition}[Lumping]\label{def:lumping}
Consider an ODE system
  \begin{equation}\label{eq:main}
    \mathbf{x}' = \mathbf{f}(\mathbf{x}),
  \end{equation}
  in the variables $\mathbf{x} = (x_1, \ldots, x_n)$ with polynomial right-hand side, that is, $\mathbf{f} = (f_1, \ldots, f_n)$ and $f_1, \ldots, f_n \in \mathbb{C}[\mathbf{x}]$.
  We say that a linear transformation $\mathbf{y} = \mathbf{x}L$ with $\mathbf{y} = (y_1, \ldots, y_m)$, $L \in \mathbb{C}^{n \times m}$, and $\operatorname{rank}L = m$ is \keyw{a lumping of~\eqref{eq:main}} if there exists $\mathbf{g} = (g_1, \ldots, g_m)$ with $g_1, \ldots, g_m \in \mathbb{C}[\mathbf{y}]$ such that 
  \[
    \mathbf{y}' = \mathbf{g}(\mathbf{y})
  \]
  for every solution $\mathbf{x}$ of~\eqref{eq:main}. 
  The number $m$ will be called \keyw{the dimension} of the lumping, and the entries of $\mathbf{y}$ will be referred to as \keyw{macro-variables}.

  Note that a lumping is uniquely defined by matrix $L$ only, and the corresponding $\mathbf{g}$ can be computed, see~\Cref{rem:how_get_g} for details.
\end{definition}

Throughout this section, we will work with the following running example~\cite[Example~1]{pozitivizor}.
The ODE model will be derived using the laws of mass-action kinetics~\cite[Ch.~7]{mass_action} from a chemical reaction network with a chemical species $X$ and four more species $A_{UU}$, $A_{UX}$, $A_{XU}$, and $A_{XX}$ which represent a molecule $A$ having two identical binding sites such that each site may be either unbound (denoted by U) or bound to $X$ (denoted by X).
These species satisfy the following reactions 
\begin{equation}\label{eq:ex_reactions}
\begin{tikzpicture}[scale=1.0, baseline=(current  bounding  box.center)]
\node (XpAu1) at (-8,0) {$X + A_{U\ast}$};
\node (Ax1) at (-6,0) {$A_{X\ast}$};
\node (XpAu2) at (-4, 0)    {$X + A_{\ast U}$};
\node (Ax2) at (-2,0)    {$A_{\ast X}$};
\draw[arrows={-stealth},transform canvas={yshift=2pt}, shorten <=0.2em, shorten >=0.2em] (XpAu1) to node[above] {$k_1$} (Ax1);
\draw[arrows={-stealth},transform canvas={yshift=-2pt}, shorten <=0.2em, shorten >=0.2em] (Ax1) to node[below] {$k_2$} (XpAu1);
\draw[arrows={-stealth},transform canvas={yshift=2pt}, shorten <=0.2em, shorten >=0.2em] (XpAu2) to node[above] {$k_1$} (Ax2);
\draw[arrows={-stealth},transform canvas={yshift=-2pt}, shorten <=0.2em, shorten >=0.2em] (Ax2) to node[below] {$k_2$} (XpAu2);
\end{tikzpicture}
\end{equation}
where $\ast \in \{X, U\}$, and the reaction rate constants of binding and dissociation are $k_1$ and $k_2$, respectively.
If we denote the concentration of any species $S$ by $[S]$, the corresponding ODE system under the law of mass-action kinetics~\cite[Ch.~7]{mass_action} will be:
  \begin{equation}\label{eq:ex_ode}
      \begin{cases}
        [X]' = k_2([A_{XU}] + [A_{UX}] + 2 [A_{XX}]) - k_1[X] ([A_{XU}] + [A_{UX}] + 2[A_{UU}]),\\
        [A_{UU}]' = k_2([A_{XU}] + [A_{UX}]) - 2 k_1[X] [A_{UU}],\\
        [A_{UX}]' = k_2([A_{XX}] - [A_{UX}]) + k_1[X]([A_{UU}] - [A_{UX}]),\\
        [A_{XU}]' = k_2([A_{XX}] - [A_{XU}]) + k_1[X] ([A_{UU}] - [A_{XU}]),\\
        [A_{XX}]' = k_1[X]([A_{XU}] + [A_{UX}]) - 2 k_2[A_{XX}].
      \end{cases}
\end{equation}

\begin{example}[Conservation laws as lumpings]\label{ex:integral}
We show that the matrix $L = (0
\;1\; 1\; 1\; 1)^T$ yields a lumping of~\eqref{eq:ex_ode}.
We have 
\[
  y = \begin{pmatrix} [X] & [A_{UU}] & [A_{UX}] & [A_{XU}] & [A_{XX}] \end{pmatrix} \cdot L = [A_{UU}] + [A_{UX}] + [A_{XU}] + [A_{XX}].
\]
Using~\eqref{eq:ex_ode} one can check that $y' = 0$, so we can take $g(y) = 0$.
Indeed, $y$ is the total concentration of $A$ and must be constant.
Furthermore, \emph{any linear conservation law} yields a lumping of dimension one.
\end{example}

\begin{example}[More informative lumping]\label{ex:nontrivial}
Another lumping for the same system~\eqref{eq:ex_ode} is given by the matrix
\[
  L = 
  \begin{pmatrix}
    1 & 0 & 0 \\
    0 & 2 & 0 \\
    0 & 1 & 1 \\
    0 & 1 & 1 \\
    0 & 0 & 2
  \end{pmatrix}
  \implies 
  \begin{cases}
    y_1 = [X],\\
    y_2 = 2[A_{UU}] + [A_{UX}] + [A_{XU}],\\
    y_3 = [A_{UX}] + [A_{XU}] + 2[A_{XX}].
  \end{cases}
\]
The macro-variables will satisfy a self-contained system
\[
y_1' = k_2 y_3 - k_1 y_1 y_2, \quad y_2' = k_2 y_3 - k_1 y_1 y_2, \quad  y_3' = -k_2 y_3 + k_1 y_1 y_2.
\]
The rationale behind this reduction is that $y_2$ and $y_3$ are concentrations of unbound and bound sites, respectively.
\end{example}

The above examples demonstrate that one system can have several lumpings (in fact, \eqref{eq:ex_ode} has more), so a natural question is \emph{how to find useful lumpings}.
The state-of-the-art software tools \texttt{CLUE}~\cite{clue} and \texttt{ERODE}~\cite{erode} approach this question by finding the lumping of the smallest dimension satisfying certain constraints:
\begin{itemize}
    \item preserving some quantities of interest unlumped (for \texttt{CLUE}~\cite{clue});
    \item or coming from a partition of the state variables (for \texttt{ERODE}~\cite{erode}).
\end{itemize}
Both constraints may be too restrictive: not all interesting lumpings come from a partition of the state variables (for instance,~\Cref{ex:integral,ex:nontrivial}; see also~\cite[Table~1]{clue}), and it may be complicated to guess in advance meaningful quantities to preserve. 

\begin{example}[Example difficult for \texttt{CLUE} and \texttt{ERODE}, see also~\Cref{sec:ex_celldeath}]\label{ex:clueerode}
    Consider another chemical reaction network~\cite[Eq. (19.20)]{Feinberg} (originally due to Daniel Knight):
\begin{equation*}
\centering
\begin{tikzpicture}[scale=1.0]
\node (Ep) at (+4.63,-0.55) {$E^*$};
\node (E) at (+3.37,-0.55) {$E$};
\node (EpS) at (-1.7,0)    {$E + S$};
\node (ES) at (0,0)    {$ES$};
\node (EpP) at (1.7,0) {$E + P$};
\node (EppS) at (0,-1.1) {$E^* + S$};

\draw[arrows={-stealth},transform canvas={yshift=2pt}, shorten <=0.2em, shorten >=0.2em] (EpS) to node[above] {$k_1$} (ES);
\draw[arrows={-stealth},transform canvas={yshift=-2pt}, shorten <=0.2em, shorten >=0.2em] (ES) to node[below] {$k_2$} (EpS);

\draw[arrows={-stealth},transform canvas={yshift=2pt}, shorten <=0.2em, shorten >=0.2em] (ES) to node[above] {$k_3$} (EpP);
\draw[arrows={-stealth},transform canvas={yshift=-2pt}, shorten <=0.2em, shorten >=0.2em] (EpP) to node[below] {$k_4$} (ES);

\draw[arrows={-stealth},transform canvas={yshift=2pt}, shorten <=0.2em, shorten >=0.2em] (E) to node[above] {$k_5$} (Ep);
\draw[arrows={-stealth},transform canvas={yshift=-2pt}, shorten <=0.2em, shorten >=0.2em] (Ep) to node[below] {$k_6$} (E);

\draw[arrows={-stealth},transform canvas={yshift=2pt}, shorten <=0.2em, shorten >=0.2em] (ES) to node[right] {$k_5$} (EppS);

\end{tikzpicture}
\end{equation*}

\noindent
where we took the rate constants of $ES \to E^* + S$ and $E \to E^\ast$ to be equal.

As in the case of~\eqref{eq:ex_reactions}, we transform the reactions into an ODE system using the law of mass-action kinetics:
    \[
    \begin{cases}
        [E]' = (k_2 + k_3)[ES] + k_6[E^{*}] - k_1[E][S] - k_4[E][P] - [E],\\
        [S]' = (k_2 + k_5)[ES] - k_1[E][S],\\
        [P]' = k_3[ES] - k_4[E][P],\\
        [ES]' = k_1 [E][S] + k_4 [E][P] - (k_2 + k_3 + k_5)[ES],\\
        [E^*]' = k_5[E] + k_5[ES] - k_6[E^*]
    \end{cases}
    \]
    One meaningful linear reduction is $y = [E] + [ES] - \frac{k_6}{k_5}[E^{*}]$ with the equation $y' = -(k_5 + k_6)y$.
    The macro-variable $y$ can be understood as a potential between the amount of $E$ (typically enzyme), both in the free form $E$ and as a part of the complex $ES$, and $E^{*}$ (typically inactivated enzyme).
    There is a bidirectional flow between these two amounts, so any $k_5$ units of $[E] + [ES]$ will be in an equilibrium with $k_6$ units of $[E^*]$, and, thus, the dynamics of the difference $y$ depends only on itself.
    
    This reduction does not come from a subdivision of the species, so it cannot be found by \texttt{ERODE}.
    Furthermore, finding it using \texttt{CLUE} would require knowing this macro-variable in advance.
    Since the current state of the software does not allow symbolic parameters to appear in the coefficients of the macro-variables (see~\Cref{rem:parameters}), we have found this reduction by taking numerical values of $k_5$ and $k_6$.
\end{example}

An alternative approach would be to find~\emph{all} the lumpings and let the user choose which ones to use. 
The problem is that there may be easily an infinite number of lumpings, for example, similarly to~\Cref{ex:integral}, one can show that the matrix
\[
    L = \begin{pmatrix} \alpha\;\;\; & 1\;\;\; & 1 + \alpha\;\;\; & 1 + \alpha\;\;\; & 1 + 2\alpha \end{pmatrix}^T
\]
yields a lumping of~\eqref{eq:ex_ode} for every number $\alpha$.
Furthermore, as we will explain later, the lumpings are in a bijection with the invariant subspaces of certain matrices coming from the Jacobian of $\mathbf{f}(\mathbf{x})$ and (at least for arbitrary matrices) the invariant subspaces can form an arbitrary algebraic variety~\cite{every_variety}.

The approach we take in this paper is to \emph{find a sequence of reductions} refining each other with the guarantee that this \emph{sequence is of maximal possible length}.

\begin{definition}[Chain of lumpings]\label{def:chain}
  For an ODE system of the form $\mathbf{x}' = \mathbf{f}(\mathbf{x})$, a sequence of linear transformations 
  \[
  \mathbf{y}_1 = \mathbf{x}L_1,\; \mathbf{y}_2 = \mathbf{x}L_2,\;\ldots, \; \mathbf{y}_\ell = \mathbf{x} L_\ell,
  \]
  where $L_1 \in \mathbb{C}^{n \times m_1}, \ldots, L_\ell \in \mathbb{C}^{n \times m_\ell}$,
  is called \keyw{a chain of lumpings} if 
  \begin{enumerate}
      \item $0 < m_1 < \ldots < m_\ell < n$;
      \item $\mathbf{y}_i = \mathbf{x} L_i$ is a lumping of~\eqref{eq:main} for every $1 \leqslant i \leqslant \ell$;
      \item for every $1 < i \leqslant \ell$, there exists a matrix $A_i$ such that $L_{i - 1} = L_i A_i$.
  \end{enumerate}
  The latter means that the reductions given by $L_1, \ldots, L_\ell$ refine each other.
  Such a chain $(L_1, \ldots, L_\ell)$ will be called \keyw{maximal} if it is not contained as a subsequence in any longer chain.  
\end{definition}

We will show (see~\Cref{cor:jordan_hoelder}) that all maximal chains are of the same length, so they are also the longest possible chains.
Given a maximal chain of lumpings, a user can ``zoom in/out'' by going left/right along the chain depending on the desired tradeoff between the size of the reduced model and the amount of information retained.
Thus, we can now formally state the main problem studied in this paper.

\begin{mainprob}
\begin{description}
  \item[]
  \item[Given] a system $\mathbf{x}' = \mathbf{f}(\mathbf{x})$ with $\mathbf{f}$ being a vector of polynomials over $\mathbb{Q}$;
  \item[Compute] a maximal chain of lumpings for the system.
\end{description}
\end{mainprob}

\begin{figure}[H]

\hspace{-8mm}\begin{tikzpicture}
\node[draw,
    rectangle,
    minimum size=0.6cm,
    fill=mygolden
] (orig) at (0,0){\small$\begin{cases}
        [X]' = k_2 ([A_{XU}] + [A_{UX}] + 2 [A_{XX}]) - k_1[X] ([A_{XU}] + [A_{UX}] + 2[A_{UU}]),\\
        [A_{UU}]' = k_2([A_{XU}] + [A_{UX}]) - 2 k_1 [X] [A_{UU}],\\
        [A_{UX}]' = k_2 ([A_{XX}] - [A_{UX}]) + k_1[X]([A_{UU}] - [A_{UX}]),\\
        [A_{XU}]' = k_2 ([A_{XX}] - [A_{XU}]) + k_1[X] ([A_{UU}] - [A_{XU}]),\\
        [A_{XX}]' = k_1[X]([A_{XU}] + [A_{UX}]) - 2 k_2 [A_{XX}].
      \end{cases}$};
 
\node[draw,
    rectangle,
    minimum size=0.2cm,
    fill=mygolden,
    below right = 3cm and -5.55cm of orig
] (red1) at (0,0){\small $\begin{cases}
         y_{4, 1}' = k_2 (y_{4, 3} + 2 y_{4, 4}) - k_1 y_{4, 1}(y_{4, 3} + 2 y_{4, 2}),\\
         y_{4, 2}' = k_2 y_{4, 3} - 2 k_1y_{4, 1} y_{4, 2},\\
         y_{4, 3}' = 2 k_2 (y_{4, 4} - y_{4, 3}) -k_1 y_{4, 1} (y_{4, 3}  - 2 y_{4, 2}),\\
         y_{4, 4}' = k_1 y_{4, 1} y_{4, 3} - 2 k_2 y_{4, 4}
      \end{cases}$};
 
\node[draw,
    rectangle,
    minimum size=0.6cm,
    fill=mygolden,
    below right = 3.25cm and 5.01cm of red1
] (red2) at (0,0){\small$\begin{cases}
         y_{3, 1}' = k_2 y_{3, 3} - k_1 y_{3, 1} y_{3, 2},\\
         y_{3, 2}' = k_2 y_{3, 3} - k_1 y_{3, 1} y_{3, 2},\\
         y_{3, 3}' = -k_2 y_{3, 3} + k_1 y_{3, 1} y_{3, 2}.
      \end{cases}$};
      
\node[draw,
    rectangle,
    minimum size=0.6cm,
    fill=mygolden,
    above right = -2cm and 6.05cm of red2
] (red3) at (0,0){\small$\begin{cases}
         y_{2, 1}' = 0,\\
         y_{2, 2}' = 0.
      \end{cases}$};
      
\node[draw,
    rectangle,
    minimum size=0.6cm,
    fill=mygolden,
    above right = 0.2cm and 6.1cm of red3
] (red4) at (0,0){\small$\begin{cases}
         y_{1, 1}' = 0.
      \end{cases}$};
 
\draw[arrowstyle] (orig.south) + (-2, 0) -- (red1.north)
    node[midway,right]{$\begin{cases}
      y_{4, 1} = [X], \; y_{4, 2} = [A_{UU}],\\
      y_{4, 3} = [A_{UX}] + [A_{XU}],\; y_{4, 4} = [A_{XX}]
    \end{cases}$};
 
 \draw[arrowstyle] (red1.east) -- (red2.west)
    node[midway,below]{\small $\hspace{3mm}\begin{cases}
      y_{3, 1} = y_{4, 1},\\
      y_{3, 2} = y_{4, 3} + 2 y_{4, 2},\\
      y_{3, 3} = y_{4, 3} + 2 y_{4, 4}
    \end{cases}$};
 
\draw[arrowstyle] (red2.north) + (-0.25, 0) -- (red3.south)
    node[midway,right]{\small $\begin{cases}
      y_{2, 1} = y_{3, 2} + y_{3, 3},\\
      y_{2, 2} = y_{3, 1} + y_{3, 3}
    \end{cases}$};
    
\draw[arrowstyle] (red3.north) -- (red4.south)
    node[midway,right]{\small $y_{1, 1} = y_{2, 1}$};
 
\end{tikzpicture}

\caption{Maximal chain of lumpings for~\eqref{eq:ex_ode} and the corresponding reductions}
\label{fig:ex_chain}
\end{figure}
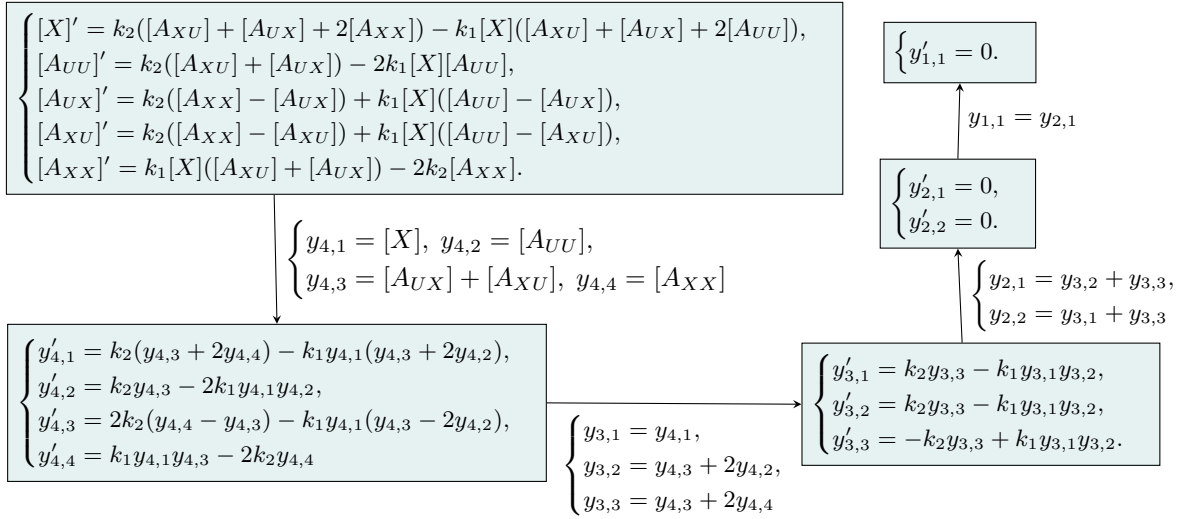

\begin{example}[Maximal chain of lumpings for~\eqref{eq:ex_ode}]\label{ex:chain}~\Cref{fig:ex_chain} shows a chain of lumpings and the corresponding reductions for our example system~\eqref{eq:ex_ode}.
   The blocks contain the reduced systems and the arrows are labeled with the transformations between the consecutive reductions (matrices $A_i$ in the terms of~\Cref{def:chain}).
   This chain of reductions includes our preceding~\Cref{ex:integral,ex:nontrivial} as $\mathbf{y}_1$ and $\mathbf{y}_3$, respectively.
   
   In this example the original dimension $n = 5$ and the dimensions of the reductions are $m_1 = 1,\; m_2 = 2\;, m_3 = 3,\; m_4 = 4$, so this chain is clearly maximal.
\end{example}

\begin{remark}[Connection to symmetries of a rule-based representation]\label{rem:rules}
The model in \eqref{eq:ex_reactions} fits naturally into the framework of rule-based modeling~\cite{Faeder2005,Danos}. 
The model is given by the rules that stipulate the species dynamics, and the rules themselves are essentially chemical reactions parametrized by different values of $\ast \in \{X, U\}$.
Note that these rules are symmetric with respect to the two binding sites of the molecule $A$. 
Thanks to this symmetry, one can find a reduction already at the level of the rules by grouping the species according to the total number of bound sites as follows: 
\[
y_1 = [X], \quad y_2 = [A_{UU}], \quad y_3 = [A_{UX}] + [A_{XU}], \quad y_4 = [A_{XX}].
\]
Note that this is the first reduction in the chain presented on \Cref{fig:ex_chain}.
There exist algorithms that exploit such rule-based symmetries~\cite{Camporesi2013}, and this particular reduction can be found using \texttt{KaDE} software tool~\cite{feret:CMSB2017}\footnote{We thank an anonymous referee for raising our awareness of this fact.}.
The next reduction in the chain, however, is not discovered by \texttt{KaDE}, although it still admits an interpretation in terms of the structure of the underlying reaction network~\cite[Example 1 and Section 3.1]{pozitivizor}.
We see an opportunity for synergy here: our structure-agnostic tool may find new types of reductions, and some of these types can be then incorporated in a scalable rule-based approach.
\end{remark}

\begin{remark}[Connection to quiver-equivariant ODEs]
  One can view a chain of lumpings in the framework of quiver-equivariant dynamical systems~\cite{quivers} with the corresponding quiver being a chain with the maps defined by $A_i$'s on the arrows.
  For this point of view, a natural generalization of the main problem stated above would be to find a maximal (in some sense) quiver $\mathcal{Q}$ such that the original ODE can be represented as $\mathcal{Q}$-equivariant. 
\end{remark}


\section{Algorithm}\label{sec:algo}

For finding a maximal chain of lumpings, we first use theory developed in~\cite{clue} to reduce the problem to a problem about common invariant subspaces of a set of matrices (\Cref{sec:reduction_to_subspaces}) and then solve the new problem using the structure theory of finite-dimensional algebras (\Crefrange{sec:theory}{sec:chain}).
The overall algorithm is summarized in~\Cref{sec:general_algo}.

\subsection{Reduction to the search for common invariant subspaces}\label{sec:reduction_to_subspaces}

Let $\mathbf{x}' = \mathbf{f}(\mathbf{x})$ be an ODE system in variables $\mathbf{x} = (x_1, \ldots, x_n)$ and $\mathbf{f}$ being a row vector of polynomials $f_1, \ldots, f_n \in \mathbb{C}[\mathbf{x}]$.
Let $J(\mathbf{x})$ be the Jacobian matrix of $\mathbf{f}$ with respect to $\mathbf{x}$.
We denote the monomials in $\mathbf{x}$ appearing in $J(\mathbf{x})$ by $m_1(\mathbf{x}), \ldots, m_N(\mathbf{x})$.
Then $J(\mathbf{x})$ can be written uniquely as 
\begin{equation}\label{eq:jacobian_decomposition} 
  J(\mathbf{x}) = \begin{pmatrix}
    \nabla f_1 & \ldots & \nabla f_n
  \end{pmatrix} = \sum\limits_{i = 1}^N J_i\cdot  m_i(\mathbf{x}), \quad\text{ where } \nabla g := \begin{pmatrix}
    \frac{\partial g}{\partial x_1} & \ldots & \frac{\partial g}{\partial x_n}
  \end{pmatrix}^T
\end{equation}
and $J_1, \ldots, J_N$ are constant matrices.

\begin{example}\label{ex:jac_decomp}
   Consider the system
   \[
     x_1' = x_1 - 2 x_2^2,\quad x_2'
      = -x_2 + x_2^2.
   \]
   In this case the decomposition~\eqref{eq:jacobian_decomposition} will be
   \[
     J(x_1, x_2) = \begin{pmatrix} 1 & 0\\ -4x_2 & -1 + 2x_2 \end{pmatrix} = \begin{pmatrix} 1 & 0 \\ 0 & -1 \end{pmatrix} + \begin{pmatrix}0 & 0 \\ -4 & 2 \end{pmatrix} x_2.
   \]
\end{example}

\begin{lemma}\label{lem:lumping_main}
  Using the notation above, the linear transformation $\mathbf{y} = \mathbf{x}L$, where $L \in \mathbb{C}^{n \times m}$, is a lumping of $\mathbf{x}' = \mathbf{f}(\mathbf{x})$ if and only if the column space of $L$ is invariant with respect to $J_1, \ldots, J_N$.
\end{lemma}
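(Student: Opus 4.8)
The plan is to pass the lumping condition through two reformulations. First, I would rewrite the defining property of a lumping as a polynomial identity: $\mathbf{y} = \mathbf{x}L$ is a lumping of~\eqref{eq:main} if and only if there is a polynomial vector $\mathbf{g} \in \mathbb{C}[\mathbf{y}]^m$ with $\mathbf{f}(\mathbf{x})\,L = \mathbf{g}(\mathbf{x}L)$ as an identity in $\mathbb{C}[\mathbf{x}]$. For the nontrivial direction: along any solution $\mathbf{x}(t)$ we have $\mathbf{y}' = \mathbf{x}'L = \mathbf{f}(\mathbf{x})L$, so the lumping condition gives $\mathbf{f}(\mathbf{x}(t))L = \mathbf{g}(\mathbf{x}(t)L)$; evaluating at $t = 0$ and using that for a polynomial right-hand side a solution germ passes through every point of $\mathbb{C}^n$, we get $\mathbf{f}(\mathbf{x}_0)L = \mathbf{g}(\mathbf{x}_0 L)$ for all $\mathbf{x}_0 \in \mathbb{C}^n$, i.e.\ the polynomial identity; the converse is immediate. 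This step is essentially the reduction established in~\cite{clue}, which I would cite.

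Next I would reformulate ``there exists $\mathbf{g}$ with $\mathbf{f}(\mathbf{x})L = \mathbf{g}(\mathbf{x}L)$'' as a statement about derivatives. Such a $\mathbf{g}$ exists if and only if the row vector $\mathbf{f}(\mathbf{x})L$ is constant on each fiber of the map $\mathbf{x}\mapsto\mathbf{x}L$, i.e.\ on each coset of $\mathcal{N} := \{\mathbf{w} : \mathbf{w}L = 0\}$; in characteristic zero this is equivalent to the vanishing of every directional derivative of every entry of $\mathbf{f}(\mathbf{x})L$ in directions $\mathbf{w}\in\mathcal{N}$. A short computation using~\eqref{eq:jacobian_decomposition} shows that the Jacobian of $\mathbf{f}(\mathbf{x})L$ with respect to $\mathbf{x}$ equals $J(\mathbf{x})\,L$, so the condition becomes $\mathbf{w}\,J(\mathbf{x})\,L = 0$ for all $\mathbf{w}\in\mathcal{N}$ and all $\mathbf{x}$. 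For the implication producing $\mathbf{g}$, I would, given that $\mathbf{f}(\mathbf{x})L$ is constant on fibers, pick a left inverse $M$ of $L$ (available since $L$ has full column rank $m$) and take $\mathbf{g}(\mathbf{y}) := \mathbf{f}(\mathbf{y}M)\,L$, which is polynomial and satisfies $\mathbf{g}(\mathbf{x}L) = \mathbf{f}(\mathbf{x}LM)L = \mathbf{f}(\mathbf{x})L$ because $\mathbf{x}LM$ and $\mathbf{x}$ lie in the same fiber.

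Finally I would peel off the monomials and dualize. Substituting $J(\mathbf{x}) = \sum_{i=1}^N J_i\,m_i(\mathbf{x})$ into the last condition gives $\sum_{i=1}^N (\mathbf{w}J_iL)\,m_i(\mathbf{x}) = 0$; since $m_1,\dots,m_N$ are distinct monomials, hence linearly independent over $\mathbb{C}$, this holds identically in $\mathbf{x}$ iff $\mathbf{w}J_iL = 0$ for every $i$. So $\mathbf{y} = \mathbf{x}L$ is a lumping iff $\mathbf{w}J_iL = 0$ for all $\mathbf{w}\in\mathcal{N}$ and all $i$. Now $\mathcal{N}$, regarded as a space of linear functionals on $\mathbb{C}^n$, is the annihilator of $V$, the column space of $L$; so ``$\mathbf{w}J_iL = 0$ for all $\mathbf{w}\in\mathcal{N}$'' says that every column of $J_iL$ is killed by every functional vanishing on $V$, hence lies in $V$. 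As the columns of $J_iL$ span $J_iV$, this is exactly $J_iV\subseteq V$. Therefore $\mathbf{y} = \mathbf{x}L$ is a lumping iff $V$ is invariant under each of $J_1,\dots,J_N$, which is the claim.

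I expect the only genuinely non-routine point to be the first step — upgrading ``holds along every solution'' to ``holds as a polynomial identity on $\mathbb{C}^n$'' — which rests on the standard existence of solution germs of a polynomial ODE through every point and on the fact that a polynomial vanishing on all of $\mathbb{C}^n$ is zero; this is where the framework of~\cite{clue} is used. The remainder is linear algebra, and the main things to get right are the row/column and transpose conventions of~\eqref{eq:jacobian_decomposition} and the use of characteristic zero when moving between ``constant along cosets of $\mathcal{N}$'' and ``directional derivatives vanish''.
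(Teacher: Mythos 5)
Your argument is correct. Note, though, that the paper does not actually spell out a proof here: it simply defers to Lemmas S.I.1 and S.II.1 of \cite{clue}, remarking that their proofs carry over from $\mathbb{R}$ to $\mathbb{C}$ once real inner products are replaced by complex ones. What you have written is essentially a self-contained reconstruction of that deferred argument: (i) upgrade ``holds along every solution'' to the polynomial identity $\mathbf{f}(\mathbf{x})L=\mathbf{g}(\mathbf{x}L)$ via local existence of solutions through every initial point plus the fact that a polynomial vanishing on $\mathbb{C}^n$ is zero; (ii) characterize the existence of $\mathbf{g}$ by constancy on fibers of $\mathbf{x}\mapsto\mathbf{x}L$, equivalently vanishing of directional derivatives along $\mathcal{N}=\{\mathbf{w}:\mathbf{w}L=0\}$, with the explicit $\mathbf{g}(\mathbf{y})=\mathbf{f}(\mathbf{y}M)L$ for a left inverse $M$ of $L$; (iii) peel off the linearly independent monomials in \eqref{eq:jacobian_decomposition} and dualize to get $J_iV\subseteq V$. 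All three steps check out against the paper's conventions (in particular $J(\mathbf{x})$ is the matrix whose \emph{columns} are the gradients $\nabla f_j$, so the gradient matrix of $\mathbf{f}L$ is indeed $J(\mathbf{x})L$, and the condition $\mathbf{w}J_iL=0$ for all $\mathbf{w}$ in the annihilator of $V$ is exactly $J_iV\subseteq V$). One small advantage of your formulation: by phrasing the duality through the annihilator of the column space rather than through an inner product, you avoid the Hermitian-versus-bilinear issue over $\mathbb{C}$ that the paper has to address when adapting the real-case proof.
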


\begin{proof}
  For the case $L \in \mathbb{R}^{n \times m}$, the statement follows from~\cite[Lemmas S.I.1 and S.II.1]{clue}.
  The proof of~\cite[Lemmas S.I.1]{clue} remains correct after replacing $\mathbb{R}$ with $\mathbb{C}$, and the proof of~\cite[Lemmas S.II.1]{clue} will be correct for the case of $\mathbb{C}$ if the real inner products are replaced with the complex ones.
\end{proof}

\begin{remark}\label{rem:how_get_g}
    A natural question is, given $L$ satisfying the conditions of~\Cref{lem:lumping_main}, how can we find $\mathbf{g}$ from~\Cref{def:lumping}?
    One approach is the following: we set $\mathbf{y} := \mathbf{x}L$ 
    and choose a subset
    $\widetilde{\mathbf{x}}$ of $\mathbf{x}$ to complete $\mathbf{y}$ to a basis of the linear span of $\mathbf{x}$.
    Then the derivatives $\mathbf{y}'$, which are equal to $\mathbf{f}L$, can be written in terms of $\mathbf{y}$ and $\widetilde{\mathbf{x}}$ via a linear change of coordinates.   \Cref{lem:lumping_main} guarantees that these polynomials will not, in fact, depend on $\widetilde{\mathbf{x}}$ and, thus, will be exactly~$\mathbf{g}$.
    An optimized version of this construction was used already in {\sc CLUE}~\cite{clue}, and is used in our implementation as well.
\end{remark}

\begin{corollary}
   A sequence of linear transformations $\mathbf{y}_1 = \mathbf{x}L_1,\;\ldots, \; \mathbf{y}_\ell = \mathbf{x} L_\ell$ is a chain of lumpings if and only if the column spaces $V_1, V_2, \ldots, V_\ell$ of $L_1, \ldots, L_\ell$ satisfy
   \begin{itemize}
       \item $V_i$ is invariant with respect to $J_1, \ldots, J_N$ for every $1\leqslant i\leqslant \ell$;
       \item $\{0\} \subsetneq V_1 \subsetneq \ldots \subsetneq V_\ell \subsetneq \mathbb{C}^n$.
   \end{itemize}
   Furthermore, the chain of lumpings is maximal if and only if the chain $V_1, \ldots, V_\ell$ is not a subsequence of a chain of subspaces satisfying the two properties above.
\end{corollary}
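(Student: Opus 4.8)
The plan is to translate each of the three defining conditions of a chain of lumpings (\Cref{def:chain}) into the language of column spaces, using \Cref{lem:lumping_main} for the lumping condition, and then to exploit the resulting dictionary between chains of lumpings and chains of invariant subspaces to settle the maximality statement.

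First I would handle the ``chain of lumpings'' equivalence. Write $V_i = \operatorname{colspace}(L_i)$ and recall that $m_i = \operatorname{rank} L_i = \dim V_i$ by \Cref{def:lumping}. Condition~(2) of \Cref{def:chain}, that each $\mathbf{y}_i = \mathbf{x} L_i$ is a lumping, is by \Cref{lem:lumping_main} exactly the statement that every $V_i$ is invariant under $J_1, \ldots, J_N$. Condition~(3), the existence of $A_i$ with $L_{i-1} = L_i A_i$, says precisely that every column of $L_{i-1}$ lies in $\operatorname{colspace}(L_i)$, i.e.\ $V_{i-1} \subseteq V_i$; conversely, $V_{i-1} \subseteq V_i$ lets one build such an $A_i$ by expressing each column of $L_{i-1}$ in the columns of $L_i$. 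Finally, condition~(1), $0 < m_1 < \cdots < m_\ell < n$, together with the inclusions $V_1 \subseteq \cdots \subseteq V_\ell$ just obtained, is equivalent to the strict flag $\{0\} \subsetneq V_1 \subsetneq \cdots \subsetneq V_\ell \subsetneq \mathbb{C}^n$, since for finite-dimensional subspaces strictness of an inclusion is the same as strict inequality of dimensions. This gives the first equivalence.

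For the maximality part, the key observation is that the assignment $(L_1, \ldots, L_\ell) \mapsto (V_1, \ldots, V_\ell)$ maps chains of lumpings onto chains of invariant subspaces (in the sense of the two listed properties), and admits compatible lifts: given any flag of invariant subspaces $\{0\} \subsetneq W_1 \subsetneq \cdots \subsetneq W_k \subsetneq \mathbb{C}^n$ together with prescribed full-column-rank representatives for some of the $W_j$, one completes it to a chain of lumpings $(M_1, \ldots, M_k)$ with $\operatorname{colspace}(M_j) = W_j$ simply by choosing bases of the remaining $W_j$ as columns, the refinement matrices existing automatically by the column-space argument above. Using this I would argue by contraposition on both sides: if $(L_1, \ldots, L_\ell)$ is contained as a subsequence in a strictly longer chain of lumpings, taking column spaces produces a strictly longer chain of invariant subspaces containing $(V_1, \ldots, V_\ell)$ as a subsequence; and conversely, a strictly longer chain of invariant subspaces containing $(V_1, \ldots, V_\ell)$ lifts to a chain of lumpings in which the representatives at the positions of $V_1, \ldots, V_\ell$ are taken to be exactly $L_1, \ldots, L_\ell$, yielding a strictly longer chain of lumpings containing $(L_1, \ldots, L_\ell)$. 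Combining the two directions gives the stated equivalence.

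The only point that is not a pure unwinding of definitions is the compatibility of the lifts in the maximality argument: one must check that the refinement condition~(3) of \Cref{def:chain} survives the interleaving of new matrices among the $L_i$ and that it imposes no constraint on a representative beyond fixing its column space. This reduces to the elementary fact that $V \subseteq V'$ already guarantees a matrix $A$ with $(\text{representative of } V) = (\text{representative of } V')\cdot A$, so no obstruction arises; I expect everything else to be routine.
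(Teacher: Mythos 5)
Your proposal is correct and matches the paper's intent: the paper states this corollary without proof as an immediate consequence of \Cref{lem:lumping_main} together with the observation that the factorization $L_{i-1}=L_iA_i$ is equivalent to containment of column spaces, which is exactly the dictionary you spell out. Your additional care with the maximality claim (lifting a longer flag of invariant subspaces back to a longer chain of lumpings by choosing bases) fills in the one step the paper leaves implicit, and it is sound.
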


In order to search for such chains of invariant subspaces, we will use theory of finite dimensional matrix algebras.

\begin{definition}[Matrix algebra]\label{def:algebra}
  Let $k$ be a field (e.g., $k = \mathbb{Q}, \mathbb{R}, \mathbb{C}$).
  A subspace $\mathcal{A} \subseteq k^{n \times n}$ of matrices is called \emph{an algebra} if it is closed under
  multiplication and contains the identity matrix.
  
  For a finite set $A_1, \ldots, A_m \in k^{n \times n}$, we denote the smallest algebra containing $A_1, \ldots, A_m$ by $\langle A_1, \ldots, A_m\rangle$.
  This algebra is equal to the span of all possible products of these matrices.
  
  For an ODE system $\mathbf{x}' = \mathbf{f}(\mathbf{x})$ with $\mathbf{x} = (x_1, \ldots, x_n)$ and $f_1, f_2, \ldots, f_n \in \mathbb{C}[\mathbf{x}]$, we consider the coefficients $J_1, \ldots, J_N$ of the Jacobian matrix of $\mathbf{f}(\mathbf{x})$ written as a polynomial in $\mathbf{x}$ as in~\eqref{eq:jacobian_decomposition}.
  We will call the algebra $\langle I_n, J_1, \ldots, J_N\rangle$ (where $I_n$ is the identity $n\times n$-matrix) \emph{the Jacobian algebra} of the system $\mathbf{x}' = \mathbf{f}(\mathbf{x})$.
\end{definition}

\begin{example}\label{ex:jac_algebra}
  \begin{itemize}
      \item[]
      \item Let $T_n$ be the set of all upper-triangular matrices in $k^{n \times n}$.
   Since the product of two upper-triangular matrices is upper-triangular again, $T_n$ is an algebra.
      \item Consider the system from~\Cref{ex:jac_decomp}. Its Jacobian algebra is
      \[
        \left\langle \begin{pmatrix} 1 & 0 \\ 0 & 1 \end{pmatrix}, \begin{pmatrix} 1 & 0 \\ 0 & -1 \end{pmatrix},  \begin{pmatrix}0 & 0 \\ -4 & 2 \end{pmatrix} \right\rangle = \{M^T \mid M \in T_2\}.
      \]
  \end{itemize}
\end{example}

Since a subspace $V \subset \mathbb{C}^n$ is invariant with respect to $J_1, \ldots, J_N$ if and only if it is invariant with respect to $\langle I_n, J_1, \ldots, J_N\rangle$, that is, invariant w.r.t. any element of the algebra, we will further focus on finding invariant subspaces of this Jacobian algebra.
An immediate benefit is that we can use the Jordan-H\"older theorem~\cite[Theorem~1.5.1]{kir_drozd} to clarify our notion of the maximal chain of lumpings  (\Cref{def:chain}): the definition only requires that a maximal chain cannot be further refined, and this, in general, does not preclude the existence of longer chains. 
The following direct consequence of \cite[Theorem~1.5.1]{kir_drozd} guarantees that a maximal chain indeed has the maximal possible length.

\begin{corollary}\label{cor:jordan_hoelder}
   For a given ODE system $\mathbf{x}' = \mathbf{f}(\mathbf{x})$, all maximal chains of lumpings have the same length.
\end{corollary}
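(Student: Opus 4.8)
The plan is to translate a maximal chain of lumpings into a composition series of a suitable module and then quote the Jordan--Hölder theorem. First I would invoke \Cref{lem:lumping_main} (and the corollary following it): a sequence $\mathbf{y}_1 = \mathbf{x}L_1, \ldots, \mathbf{y}_\ell = \mathbf{x}L_\ell$ is a chain of lumpings of $\mathbf{x}' = \mathbf{f}(\mathbf{x})$ precisely when the column spaces $V_i$ of the $L_i$ form a strictly increasing tower $\{0\} \subsetneq V_1 \subsetneq \cdots \subsetneq V_\ell \subsetneq \mathbb{C}^n$ of subspaces invariant under $J_1, \ldots, J_N$. As noted just before the statement, invariance under $J_1, \ldots, J_N$ is the same as invariance under the whole Jacobian algebra $\mathcal{A} = \langle I_n, J_1, \ldots, J_N\rangle$; equivalently, regarding $\mathbb{C}^n$ as a left $\mathcal{A}$-module via matrix--vector multiplication, the $V_i$ are exactly the $\mathcal{A}$-submodules of $\mathbb{C}^n$. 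Thus chains of lumpings correspond to strictly increasing towers of $\mathcal{A}$-submodules lying strictly between $\{0\}$ and $\mathbb{C}^n$, and such a chain is maximal iff the corresponding tower cannot be refined by inserting a further $\mathcal{A}$-submodule.

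Next I would pass to composition series. Given a maximal chain of lumpings with column spaces $V_1, \ldots, V_\ell$, set $V_0 := \{0\}$ and $V_{\ell+1} := \mathbb{C}^n$, obtaining a filtration $V_0 \subsetneq V_1 \subsetneq \cdots \subsetneq V_{\ell+1}$ of the $\mathcal{A}$-module $\mathbb{C}^n$. I claim this is a composition series. Indeed, if some quotient $V_{i+1}/V_i$ were not simple, there would be an $\mathcal{A}$-submodule $W$ with $V_i \subsetneq W \subsetneq V_{i+1}$; since $\{0\} \subseteq W \subseteq \mathbb{C}^n$, this $W$ is a proper nonzero invariant subspace, so by \Cref{lem:lumping_main} it defines a lumping refining the chain, contradicting maximality. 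Conversely, deleting the two endpoints of any composition series of $\mathbb{C}^n$ yields a maximal chain of lumpings. Hence maximal chains of lumpings correspond exactly to composition series of the $\mathcal{A}$-module $\mathbb{C}^n$, a chain of length $\ell$ matching a composition series with $\ell + 1$ factors.

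Finally, since $\dim_{\mathbb{C}} \mathbb{C}^n = n$, any strictly ascending chain of $\mathcal{A}$-submodules has at most $n$ steps, so $\mathbb{C}^n$ has finite length as an $\mathcal{A}$-module and therefore admits a composition series; by the Jordan--Hölder theorem \cite[Theorem~1.5.1]{kir_drozd} all of them have the same number of factors, say $L \leqslant n$. Consequently every maximal chain of lumpings has length $L - 1$, which proves the claim. The only point that genuinely requires care is the bookkeeping around the endpoints $\{0\}$ and $\mathbb{C}^n$ --- checking that ``cannot be refined as a chain of lumpings'' is literally the condition that every successive quotient of the extended filtration is simple, and that the degenerate cases ($n \leqslant 1$, or $\mathbb{C}^n$ simple, where there are no lumpings at all and $L = 1$) are covered.
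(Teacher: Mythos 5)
Your proposal is correct and follows essentially the same route as the paper: the paper likewise identifies chains of lumpings with chains of $\mathcal{A}$-invariant subspaces of $\mathbb{C}^n$ for the Jacobian algebra $\mathcal{A}$ and then declares the corollary a direct consequence of the Jordan--H\"older theorem \cite[Theorem~1.5.1]{kir_drozd}. Your write-up simply makes explicit the bookkeeping (adjoining the endpoints $\{0\}$ and $\mathbb{C}^n$, simplicity of successive quotients, finite length) that the paper leaves implicit.
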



\subsection{Generating the algebra}
\label{sec:generating-algebra}

For performing explicit computations with the Jacobian algebra $\langle I_n, J_1, \ldots, J_N \rangle$ (\Cref{def:algebra}), we will compute its basis. 
\Cref{alg:algebra-basis} gives a simplified version of our approach, which is essentially~\cite[Algorithm~2]{clue} applied to matrices instead of vectors.
Similarly to~\cite{clue}, we employ modular computation (cf.~\cite[Algorithm~3]{clue}) to avoid the intermediate expression swell and use sparse linear algebra.

Building upon this straightforward adaptation of the approach from~\cite{clue}, we significantly improve the performance by taking further advantage of the sparsity of the input and output. 
In the models from literature, the nonlinear part of the model is typically sparse and, as a result, most of $J_1, \ldots, J_N$ are extremely sparse; furthermore, the basis of the Jacobian algebra also can be often chosen to be very sparse.
Hence, many of the matrices $C$ from~\ref{step:alg1-3-b-1} will be sparse as well.
However, some of the matrices computed at the intermediate steps may be still quite dense slowing down the whole algorithm.
We deal with the issue by temporarily deferring~\ref{step:adding} for relatively dense matrices $C$ and then, once the outer loop exits signaling that $P$ is empty, we add each of the deferred matrices to $P$ and restart the iteration.
This way we ensure that we have generated enough sparse matrices in the algebra so that the reductions of the dense matrices will be more sparse now.
Thanks to this optimization, \Cref{alg:algebra-basis} is not a bottleneck in our computation which it was when we used the approach from~\cite{clue} directly.

\begin{algorithm}[H]
\caption{Finding a basis of matrix algebra (basic version)}
\label{alg:algebra-basis}
\begin{description}[itemsep=0pt]
\item[Input] a set of square matrices $A_1, \ldots, A_\ell \in k^{n \times n}$;
\item[Output] a basis $S$ of the smallest linear subspace $\mathcal{A} \subseteq k^{n \times n}$ containing all possible products of $A_1, \ldots, A_\ell$;
\end{description}

\begin{enumerate}[label = \textbf{(Step~\arabic*)}, leftmargin=*, align=left, labelsep=2pt, itemsep=4pt]
    \item Set $S$ to be any basis of the linear span of $A_1, \ldots, A_\ell$ and let $P := S$.
    \item \label{step:alg1-3} While $P \neq \varnothing$ do
    \begin{enumerate}[label = (\alph*), leftmargin=*, align=left, labelsep=2pt, itemsep=4pt]
    \item Take $B$ to be an element of $P$ and remove it from $P$.
    \item For every $A$ in $\{A_1, \ldots, A_\ell\}$ do
    \begin{enumerate}[label = \roman*., leftmargin=*, align=left, labelsep=2pt, itemsep=4pt, ref=\theenumi \theenumii \roman*]
    \item \label{step:alg1-3-b-1} Compute $C := AB$ and reduce $C$ w.r.t. $S$ via Gaussian reduction.
    \item\label{step:adding} If $C \neq 0$, set $S := S \cup \{C\}$ and $P := P\cup \{C\}$.
    \end{enumerate}
    \end{enumerate}
    \item Return $S$.
\end{enumerate}
\end{algorithm}


\subsection{Search for invariant subspaces: algebraic preliminaries}\label{sec:theory}

For this section, we fix a ground field $k$ of characteristic zero.
The cases we are mostly interested in are rational numbers $\mathbb{Q}$, algebraic numbers $\overline{\mathbb{Q}}$, and complex numbers $\mathbb{C}$.

\begin{definition}[Radical of an algebra]\label{def:radical}
Let $\mathcal{A} \subseteq k^{n \times n}$ be an algebra.
\begin{itemize}
    \item A subspace $\mathcal{I} \subseteq \mathcal{A}$ is called \keyw{an ideal} (resp., \keyw{left ideal}) if $AB, BA \in \mathcal{I}$ (resp., $AB \in \mathcal{I}$) for every $A \in \mathcal{A}$ and $B \in \mathcal{I}$.
    \item An ideal (resp., left ideal) $\mathcal{I} \subseteq \mathcal{A}$ is \keyw{nilpotent} if there exists $N$ such that the product of any $N$ elements of $\mathcal{I}$ is zero.
     
    \item The set of all elements $A \in \mathcal{A}$ such that the left ideal $\mathcal{A}\cdot A$ is nilpotent is called \keyw{the radical} of $\mathcal{A}$.
    It is a nilpotent ideal of $\mathcal{A}$ by~\cite[Theorems~3.1.6, 3.1.10]{kir_drozd}.
\end{itemize}
\end{definition}

\begin{example}\label{ex:radical}
   Let $T_n$ be the set of all upper-triangular matrices in $k^{n \times n}$.
   Consider a subset $U_n \subset T_n$ of strictly upper-triangular matrices.
   One can easily verify that $U_n$ is an ideal and the product of any $n$ elements of $U_n$ is zero.
   Since, for every $A \in U_n$, we have $T_n \cdot A \subseteq U_n$, we deduce that $U_n$ is the radical of $T_n$.
\end{example}

Dixon's theorem~\cite[Theorem~11]{constructive_survey} implies that the radical of an algebra $\mathcal{A} \subseteq k^{n \times n}$ can be computed by finding the kernel of a square matrix of order $\dim \mathcal{A} \leqslant n^2$.
The relevance of the notion of radical to our problem is demonstrated by the following lemma.

\begin{lemma}
\label{lemma:radical-kernel}
  Let $\mathcal{A} \subseteq k^{n \times n}$ be an algebra, and let $\mathcal{R} \subset \mathcal{A}$ be its radical.
  If $\mathcal{R} \neq \{0\}$, then the intersection $\bigcap\limits_{R \in \mathcal{R}} \operatorname{Ker} R$ is nonzero and is invariant w.r.t. $\mathcal{A}$.
\end{lemma}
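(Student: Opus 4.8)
The plan is to show both claims—nonvanishing and $\mathcal{A}$-invariance—of the common kernel $K := \bigcap_{R \in \mathcal{R}} \operatorname{Ker} R$ by exploiting the two defining features of the radical: it is a two-sided ideal of $\mathcal{A}$, and it is nilpotent. I will treat $\mathbb{C}^n$ (or $k^n$) as a left $\mathcal{A}$-module via matrix-vector multiplication, so that ``$K$ invariant w.r.t. $\mathcal{A}$'' means $AK \subseteq K$ for all $A \in \mathcal{A}$.

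First, invariance. Let $v \in K$ and $A \in \mathcal{A}$; I must check $R(Av) = 0$ for every $R \in \mathcal{R}$. Since $\mathcal{R}$ is a \emph{right} ideal (it is two-sided), the product $RA$ lies in $\mathcal{R}$, hence $RA$ annihilates $v$, i.e. $R(Av) = (RA)v = 0$. Thus $Av \in K$, and $K$ is invariant. This step is essentially immediate once one unwinds the definitions; the only thing to be careful about is using the correct sidedness of the ideal property.

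Second, nonvanishing. This is where the nilpotency is needed. Because $\mathcal{R}$ is a nilpotent ideal, there is some $N$ with the property that any product of $N$ elements of $\mathcal{R}$ is the zero matrix; equivalently, $\mathcal{R}^N = 0$ as a set of matrices (here $\mathcal{R}^j$ denotes the span of all $j$-fold products of elements of $\mathcal{R}$). Consider the decreasing chain of subspaces $\mathbb{C}^n \supseteq \mathcal{R}\,\mathbb{C}^n \supseteq \mathcal{R}^2\,\mathbb{C}^n \supseteq \cdots \supseteq \mathcal{R}^N\,\mathbb{C}^n = \{0\}$, where $\mathcal{R}^j\,\mathbb{C}^n$ is the span of all vectors $R_1 \cdots R_j v$. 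Since $\mathcal{R} \neq \{0\}$, the first term $\mathcal{R}\,\mathbb{C}^n$ need not be zero, but the last one is; hence there is a smallest index $j \geq 1$ with $\mathcal{R}^{j}\,\mathbb{C}^n = \{0\}$ while $\mathcal{R}^{j-1}\,\mathbb{C}^n \neq \{0\}$ (with the convention $\mathcal{R}^0\,\mathbb{C}^n = \mathbb{C}^n$, which is nonzero as $n \geq 1$). Pick any nonzero $w \in \mathcal{R}^{j-1}\,\mathbb{C}^n$. Then for every $R \in \mathcal{R}$ we have $Rw \in \mathcal{R}^{j}\,\mathbb{C}^n = \{0\}$, so $w \in \operatorname{Ker} R$; as this holds for all $R \in \mathcal{R}$, we get $0 \neq w \in K$. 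Therefore $K$ is nonzero.

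The main obstacle—if there is one—is purely organizational: keeping the nilpotency index bookkeeping consistent, i.e. making sure that ``product of any $N$ elements of $\mathcal{R}$ is zero'' (the definition given in \Cref{def:radical}) is correctly translated into the statement that the filtration $\mathcal{R}^j\,\mathbb{C}^n$ terminates at $\{0\}$, and that a minimal-index argument extracts a nonzero vector killed by all of $\mathcal{R}$. No deep structure theory is required beyond the fact, already cited in \Cref{def:radical}, that the radical is a nilpotent two-sided ideal; everything else is elementary linear algebra on the module $\mathbb{C}^n$.
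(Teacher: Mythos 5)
Your proof is correct and follows essentially the same route as the paper: the invariance argument via $RA \in \mathcal{R}$ is identical, and your minimal-index filtration $\mathcal{R}^{j}\,k^n$ is just the module-level rephrasing of the paper's choice of a nonzero product $M$ of $N-1$ elements of $\mathcal{R}$ (whose image lies in the common kernel). No gaps.
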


\begin{proof}
   
    Since $\mathcal{R}$ is a nilpotent ideal, there exists the smallest integer $N$ such that the product of any $N$ elements of $\mathcal{R}$ is zero.
   Then, there exists $0 \neq M \in k^{n\times n}$ which is a product of $N - 1$ elements of $\mathcal{R}$.
   Hence, we have $RM = 0$ for every $R \in \mathcal{R}$, so $V := \bigcap\limits_{R \in \mathcal{R}} \operatorname{Ker} R \supseteq \operatorname{Im} M$ is nontrivial.
   
   Consider $v \in V$, $A \in \mathcal{A}$, and $R \in \mathcal{R}$.
   Since $RA \in \mathcal{R}$, we have $RAv = 0$, so $Av \in \operatorname{Ker} R$.
   Thus, $V$ is invariant w.r.t. $\mathcal{A}$.
\end{proof}

\begin{example}
   Consider the system from \Cref{ex:jac_decomp}.
   In~\Cref{ex:jac_algebra}, it was shown that the Jacobian algebra of this system is the set of lower triangular matrices.
   Similarly to \Cref{ex:radical} we find that the radical of this algebra is $\begin{pmatrix}
     0 & 0\\
     \lambda & 0
   \end{pmatrix}$.
   The common kernel of the radical is spanned by the second basis vector yielding the reduction $y' = -y + y^2$ (with $y = x_2$). 
\end{example}

\begin{definition}[Semisimple algebra]
  An algebra $\mathcal{A} \subseteq k^{n \times n}$ is called \keyw{semisimple} if its radical is zero.
\end{definition}

We will use the following characterization of semisimple algebras.

\begin{theorem}[Wedderburn-Artin, {{\cite[Theorems~2.4.3 and~2.6.2]{kir_drozd}}}]\label{thm:wedart}
  Let $\mathcal{A} \subseteq k^{n \times n}$ be a semisimple algebra.
  Then there exist
  \begin{enumerate}
      \item algebras $\mathcal{A}_1 \subseteq k^{n_1 \times n_1}, \ldots, \mathcal{A}_\ell \subseteq k^{n_\ell \times n_\ell}$ such that $\mathcal{A}_i$ does not have a nontrivial proper invariant subspace in $k^{n_i}$ for every $1 \leqslant i \leqslant \ell$,
      \item integers $m_1, \ldots, m_\ell$ such that $n_1m_1 + \ldots + n_\ell m_\ell = n$,
      \item a basis in $k^n$
  \end{enumerate}
  such that, in this basis, we have
  \begin{equation}\label{eq:wedart}
    \mathcal{A} = \left\{ \operatorname{Diag}(\underbrace{A_1, \ldots, A_1}_{m_1 \text{ times}}, \ldots, \underbrace{A_\ell, \ldots, A_\ell}_{m_\ell \text{ times}}) \mid A_1 \in \mathcal{A}_1,\ldots, A_\ell \in \mathcal{A}_\ell \right\},
  \end{equation}
where $\operatorname{Diag}(B_1, \ldots, B_N)$ denotes the block-diagonal matrix with blocks $B_1, \ldots, B_N$.
\end{theorem}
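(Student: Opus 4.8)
The plan is to prove this by the standard module‑theoretic route (as in~\cite[Chapter~2]{kir_drozd}), so I will only indicate the main steps rather than reproduce the textbook argument. Regard $k^n$ as a left $\mathcal{A}$-module via matrix--vector multiplication; since $\mathcal{A}\subseteq k^{n\times n}$, this action is faithful. The first step is to show that semisimplicity of $\mathcal{A}$ (its radical being $\{0\}$) forces $k^n$ to be a \emph{semisimple} $\mathcal{A}$-module, i.e.\ a direct sum of simple submodules; equivalently, every $\mathcal{A}$-invariant subspace of $k^n$ admits an $\mathcal{A}$-invariant complement. This is the implication ``zero radical $\Rightarrow$ every module is semisimple'', which one obtains by first writing the regular module ${}_{\mathcal{A}}\mathcal{A}$ as a sum of minimal left ideals and then using that submodules and quotients of semisimple modules are again semisimple (so that $k^n$, a quotient of a free module, is semisimple). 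This is the step that carries the real content and is the main obstacle.

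Granting that, I would collect the simple summands of $k^n$ into isomorphism classes, obtaining an isotypic decomposition $k^n = M_1\oplus\cdots\oplus M_\ell$ with $M_i\cong W_i^{\oplus m_i}$, where $W_1,\ldots,W_\ell$ are pairwise non-isomorphic simple $\mathcal{A}$-modules (Schur's lemma makes this decomposition well defined); set $n_i:=\dim_k W_i$, so $n=\sum_i m_i n_i$, which is claim~(2). Let $\rho_i\colon\mathcal{A}\to k^{n_i\times n_i}$ be the representation afforded by $W_i$ in a fixed $k$-basis of $W_i$, and put $\mathcal{A}_i:=\rho_i(\mathcal{A})$. Each $M_i$ is an $\mathcal{A}$-invariant subspace, hence is mapped into itself by every element of $\mathcal{A}$; fixing an $\mathcal{A}$-isomorphism $W_i^{\oplus m_i}\xrightarrow{\ \sim\ } M_i$ and transporting the chosen basis of $W_i$ to each of the $m_i$ copies, one checks that $a\in\mathcal{A}$ acts on $M_i$ as $\operatorname{Diag}(\rho_i(a),\ldots,\rho_i(a))$. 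Concatenating these bases over $i$ yields a basis of $k^n$ in which $\mathcal{A}$ is \emph{contained} in the set on the right of~\eqref{eq:wedart} (with $A_i=\rho_i(a)$), and claim~(1) is then automatic, since an $\mathcal{A}_i$-invariant subspace of $k^{n_i}$ is exactly a submodule of the simple module $W_i$.

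The last step is to upgrade this inclusion to an equality, i.e.\ to show that $(\rho_1(a),\ldots,\rho_\ell(a))$ sweeps out all of $\mathcal{A}_1\times\cdots\times\mathcal{A}_\ell$ as $a$ ranges over $\mathcal{A}$. Here I would invoke the Jacobson density theorem: for each simple $W_i$, finite-dimensionality turns density into surjectivity, so $\mathcal{A}_i$ is a full matrix algebra over the division algebra $\operatorname{End}_{\mathcal{A}}(W_i)$, in particular a simple ring; hence $\operatorname{Ann}(W_i)=\ker\rho_i$ is a maximal two-sided ideal, and for $i\neq j$ these ideals are distinct, hence pairwise comaximal. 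By the Chinese remainder theorem, $\mathcal{A}$ surjects onto $\prod_i\mathcal{A}/\operatorname{Ann}(W_i)=\prod_i\mathcal{A}_i$, and its kernel $\bigcap_i\operatorname{Ann}(W_i)$ annihilates $k^n=\bigoplus_i M_i$, hence is $\{0\}$ by faithfulness; so the surjection is an isomorphism, which is precisely~\eqref{eq:wedart}. Everything apart from the first step and the density theorem (Schur's lemma, the dimension count, the choice of adapted bases, the CRT bookkeeping) is routine.
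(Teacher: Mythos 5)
The paper offers no proof of this statement --- it is imported verbatim from the cited reference (Drozd--Kirichenko, Theorems~2.4.3 and~2.6.2) --- and your outline (semisimplicity of the faithful module $k^n$, isotypic decomposition via Schur's lemma, Jacobson density plus the Chinese remainder theorem to upgrade the block-diagonal inclusion to the equality~\eqref{eq:wedart}) is exactly the standard argument that reference gives. Your sketch is correct as far as it goes; the only point you assert without comment is that non-isomorphic simple modules $W_i \not\cong W_j$ have distinct annihilators (needed for comaximality in the CRT step), which follows from the uniqueness of the simple module over a simple Artinian quotient and is routine.
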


\begin{example}
   Consider the set of all matrices of the form
   \[
   \begin{pmatrix}
     a & b & 0 & 0\\
     -b & a & 0 & 0\\
     0 & 0 & c & 0\\
     0 & 0 & 0 & c
   \end{pmatrix}, \quad \text{where } a, b, c \in \mathbb{Q}.
   \]
   This is a semisimple algebra in the form~\eqref{eq:wedart} with $\ell = 2$, $m_1 = 1$, and $m_2 = 2$.
\end{example}

In the case $\ell = 1$ and $m_1 = 1$ in the decomposition~\eqref{eq:wedart} from~\Cref{thm:wedart}, there are no invariant subspaces in $k^n$ but there still may be invariant subspaces in $\overline{k}^n$ if $k \neq \overline{k}$, where $\overline{k}$ is the algebraic closure of field $k$.
These subspaces can be found using the center of the algebra.

\begin{definition}[Center/Centralizer]\label{def:center}
    Let $\mathcal{A} \subseteq k^{n\times n}$ be an algebra.
    \begin{itemize}
        \item The \keyw{center} of $\mathcal{A}$ is the set of all $M \in \mathcal{A}$ such that $MA = AM$ for every $A \in \mathcal{A}$.
        \item The \keyw{centralizer} of $\mathcal{A}$ is the set of all $M \in k^{n \times n}$ such that $MA = AM$ for every $A \in \mathcal{A}$.
    \end{itemize}
    
\end{definition}

Since, for every fixed $A$, $AM = MA$ is a system of linear equations in the entries, the center and centralizer can be computed by solving a system of linear equations.

\begin{lemma}\label{lem:center}
  Let $\mathcal{A} \subseteq k^{n\times n}$ be a subalgebra.
  Let $\mathcal{C}$ be the centralizer of $\mathcal{A}$.
  For every $C \in \mathcal{C}$, every eigenspace of $C$ is an invariant subspace of $\mathcal{A}$ in $\overline{k}^n$.
\end{lemma}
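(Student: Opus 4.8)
The plan is to exploit the defining property of the centralizer: since $C$ commutes with every $A \in \mathcal{A}$, the action of $\mathcal{A}$ on $\overline{k}^n$ preserves the generalized eigenspaces (and in fact the ordinary eigenspaces) of $C$. First I would fix $C \in \mathcal{C}$ and an eigenvalue $\lambda \in \overline{k}$ of $C$ (working over the algebraic closure so that eigenvalues exist), and set $E_\lambda := \operatorname{Ker}(C - \lambda I_n) \subseteq \overline{k}^n$. The claim is that $E_\lambda$ is invariant under $\mathcal{A}$.

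The key computation is the following: take $v \in E_\lambda$ and $A \in \mathcal{A}$. Since $A$ and $C$ commute (note that $CA = AC$ holds over $k$, hence also over $\overline{k}$), we have
\[
  (C - \lambda I_n)(Av) = C A v - \lambda A v = A C v - \lambda A v = A(Cv - \lambda v) = A \cdot 0 = 0,
\]
so $Av \in E_\lambda$. This shows $A E_\lambda \subseteq E_\lambda$ for every $A \in \mathcal{A}$, i.e. $E_\lambda$ is an invariant subspace of $\mathcal{A}$ in $\overline{k}^n$, which is exactly the assertion.

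There is essentially no obstacle here; the only subtlety worth a sentence is that the eigenvalue $\lambda$ may genuinely lie in $\overline{k} \setminus k$, which is precisely why the statement is phrased over $\overline{k}^n$ rather than $k^n$ — over $k$ the matrix $C$ need not have any eigenvalues at all, and this is the whole point of passing to the centralizer in the case $\ell = 1$, $m_1 = 1$ of \Cref{thm:wedart}, where no proper invariant subspace exists over $k$. I would also remark that the argument applies verbatim to the generalized eigenspace $\operatorname{Ker}(C - \lambda I_n)^j$ for any $j$, should that be needed downstream, since $A$ commutes with any polynomial in $C$.
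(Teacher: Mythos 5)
Your proposal is correct and follows essentially the same argument as the paper: fix an eigenvector $v$ for $\lambda$, use $CA = AC$ to compute $C(Av) = \lambda(Av)$, and conclude $Av$ stays in the eigenspace. The additional remarks on why one must pass to $\overline{k}$ and on generalized eigenspaces are accurate but not part of the paper's (one-line) proof.
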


\begin{proof}
   Let $V$ be an eigenspace of $C$ corresponding to the eigenvalue $\lambda$.
   Let $A \in \mathcal{A}$.
   Then, for $v\in V$, we have $C(Av) = (CA)v = (AC) v = \lambda Av$, so $Av$ belongs to $V$ as well.
\end{proof}

\begin{lemma}\label{lem:equal_blocks_crit}
  Let $\mathcal{A} \subseteq \mathbb{Q}^{n \times n}$ be a semisimple algebra.
  Let $M \in \mathcal{A}$ be a matrix such that the characteristic polynomial of $M$ is of the form $p(t)^d$, where $p(t)$ is $\mathbb{Q}$-irreducible.
  Let $\mathcal{Z}$ and $\mathcal{C}$ be the center and centralizer of $\mathcal{A}$, respectively.
  Then the equality $\dim\mathcal{C} = d^2 \dim\mathcal{Z}$ is equivalent to the fact that, in the Wedderburn-Artin decomposition~\eqref{eq:wedart} of $\mathcal{A}$, we have $\ell = 1$ and $m_1 = d$.
\end{lemma}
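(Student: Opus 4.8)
The plan is to translate both sides of the equivalence into numerical data of the Wedderburn--Artin decomposition of $\mathcal{A}$ and then to use that $\chi_M$ is a power of an irreducible polynomial to pin these data down.

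First I would apply \Cref{thm:wedart} to fix a basis realizing the block form \eqref{eq:wedart}, with simple factors $\mathcal{A}_1, \dots, \mathcal{A}_\ell$, the factor $\mathcal{A}_i$ acting irreducibly on $S_i := \mathbb{Q}^{n_i}$ with multiplicity $m_i$. To each block I attach the field $\mathbb{K}_i := Z(\mathcal{A}_i)$ with $z_i := \dim_{\mathbb{Q}} \mathbb{K}_i$, and the centralizer $\mathcal{D}_i$ of $\mathcal{A}_i$ inside $\mathbb{Q}^{n_i \times n_i}$, which by Schur's lemma is a division algebra with center $\mathbb{K}_i$; set $\delta_i^2 := \dim_{\mathbb{K}_i} \mathcal{D}_i$. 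Two bookkeeping identities are then needed. Since $\mathcal{A} \cong \mathcal{A}_1 \times \dots \times \mathcal{A}_\ell$, its center is $\mathbb{K}_1 \times \dots \times \mathbb{K}_\ell$, so $\dim \mathcal{Z} = \sum_i z_i$. Decomposing $\mathbb{Q}^n$ into isotypic components $\bigoplus_i S_i^{\oplus m_i}$, with the $S_i$ pairwise non-isomorphic simple $\mathcal{A}$-modules, the centralizer $\mathcal{C}$ consists of block-diagonal maps that are $m_i \times m_i$ matrices over $\mathcal{D}_i$ on the $i$-th component, so $\dim \mathcal{C} = \sum_i m_i^2 z_i \delta_i^2$.

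Next I would bring in $M$. Writing $M = \operatorname{Diag}(M_1, \dots, M_1, \dots, M_\ell, \dots, M_\ell)$ compatibly with \eqref{eq:wedart}, with $M_i \in \mathcal{A}_i$ repeated $m_i$ times, one has $\chi_M = \prod_i \chi_{M_i}^{m_i}$, where $\chi_{M_i}$ is the characteristic polynomial of $M_i$ on $S_i$. Since $\chi_M = p^d$ with $p$ irreducible and each $\chi_{M_i}$ monic of degree $n_i \geqslant 1$, necessarily $\chi_{M_i} = p^{c_i}$ with $c_i \geqslant 1$ and $\sum_i m_i c_i = d$. The key input is the classical fact that the characteristic polynomial of an element of the central simple algebra $\mathcal{A}_i$ acting on its simple module equals $N_{\mathbb{K}_i / \mathbb{Q}}$ of the reduced characteristic polynomial, raised to the power $\delta_i$; hence $\chi_{M_i}$ is a $\delta_i$-th power in $\mathbb{Q}[t]$, and irreducibility of $p$ forces $\delta_i \mid c_i$. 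Consequently $m_i \delta_i \leqslant m_i c_i \leqslant d$ for every $i$.

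Putting these together, $\dim \mathcal{C} = \sum_i z_i (m_i \delta_i)^2 \leqslant d^2 \sum_i z_i = d^2 \dim \mathcal{Z}$, with equality precisely when $m_i \delta_i = d$ for every $i$; summing then gives $\ell d = \sum_i m_i \delta_i \leqslant \sum_i m_i c_i = d$, so $\ell = 1$ and $m_1 \delta_1 = d$. Conversely, if $\ell = 1$ and $m_1 = d$ then $c_1 = 1$, so $\chi_{M_1} = p$ is irreducible, hence $M_1$ is non-derogatory and $\mathbb{Q}[M_1]$ is a maximal commutative subalgebra of $\mathbb{Q}^{n_1 \times n_1}$; then $\mathcal{D}_1 \subseteq \mathbb{Q}[M_1]$ is itself a field, so $\mathcal{D}_1 = \mathbb{K}_1$, $\delta_1 = 1$, and $\dim \mathcal{C} = m_1^2 z_1 = d^2 \dim \mathcal{Z}$. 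The step I expect to be the main obstacle is the remaining part of the forward direction, namely promoting ``$\ell = 1$, $m_1 \delta_1 = d$'' to ``$\ell = 1$, $m_1 = d$'' by ruling out $\delta_1 > 1$: the bare dimension count does not distinguish $d$ copies of a block split over its center from $d/\delta_1$ copies of a block of index $\delta_1$. To close it I would push further on the sharpened relation $c_1 = \delta_1$ together with the structure of the maximal subfield $\mathbb{Q}[M_1]$ of $\mathcal{A}_1$ --- and, in the algorithmic setting, on the freedom to take $M$ inside the center $\mathcal{Z}$, which makes $\chi_{M_1}$ a fixed power of $p$ and reduces the claim to the inequality $\deg p \leqslant z_1$ --- to force $\delta_1 = 1$. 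The rest is routine finite-dimensional algebra.
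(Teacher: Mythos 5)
Your proposal follows the same overall strategy as the paper's proof --- pass to the Wedderburn--Artin decomposition \eqref{eq:wedart}, compute $\dim\mathcal{Z}$ and $\dim\mathcal{C}$ block by block, and compare with $d$ via a two-sided estimate --- but you execute the centralizer computation more carefully than the paper does. The paper asserts that $\mathcal{C}\cong\operatorname{Mat}_{m_1}(\mathcal{Z}_1)\times\ldots\times\operatorname{Mat}_{m_\ell}(\mathcal{Z}_\ell)$, hence $\dim\mathcal{C}=\sum_i m_i^2\dim\mathcal{Z}_i$, and concludes from the chain $\dim\mathcal{C}\leqslant(m_1+\ldots+m_\ell)^2\dim\mathcal{Z}\leqslant d^2\dim\mathcal{Z}$ that equality forces $\ell=1$ and $m_1=d$. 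What Schur's lemma actually gives is $\mathcal{C}\cong\prod_i\operatorname{Mat}_{m_i}(\mathcal{D}_i)$, where $\mathcal{D}_i=\operatorname{End}_{\mathcal{A}_i}(\mathbb{Q}^{n_i})$ is a division algebra whose center is $\mathcal{Z}_i$ but which over $\mathbb{Q}$ may be strictly larger; this is exactly the index $\delta_i$ you keep track of. Your converse direction is complete and correct (the non-derogatory argument does force $\mathcal{D}_1=\mathcal{Z}_1$ when $c_1=1$), and your forward direction stalls precisely at the point the paper's proof silently identifies $\mathcal{D}_i$ with $\mathcal{Z}_i$.

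The obstacle you flag is genuine and cannot be closed: the forward implication is false as stated. Take $\mathcal{A}$ to be the rational quaternions $\mathbb{Q}\langle 1,i,j,k\rangle$ in their left regular representation inside $\mathbb{Q}^{4\times 4}$. This algebra is simple and acts irreducibly on $\mathbb{Q}^4$ (its left ideals are trivial), so $\ell=1$, $m_1=1$, $n_1=4$; its center is $\mathbb{Q}$ (so $\dim\mathcal{Z}=1$) while its centralizer is the algebra of right multiplications (so $\dim\mathcal{C}=4$). The element $M$ given by left multiplication by $i$ has characteristic polynomial $(t^2+1)^2$, so $p=t^2+1$ and $d=2$, and indeed $\dim\mathcal{C}=4=d^2\dim\mathcal{Z}$, yet $m_1=1\neq d$. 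So the equality $\dim\mathcal{C}=d^2\dim\mathcal{Z}$ yields only what your dimension count gives, namely $\ell=1$ and $m_1\delta_1=d$; ruling out $\delta_1>1$ is impossible in general. You should not regard this as a failure of your argument but as an error in the lemma and in the paper's proof of it (the displayed formula for $\dim\mathcal{C}$ is the one that fails in the example above). The weakened conclusion $\ell=1$, $m_1\delta_1=d$ is in fact what the downstream argument needs --- it still implies that $\overline{\mathbb{Q}}{}^n$ splits into exactly $d\dim\mathcal{Z}$ irreducible summands, which is what \ref{step:eigenspaces} of \Cref{alg:one_space_full} relies on --- so the correct fix is to weaken the statement of the lemma rather than to strengthen your proof.
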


\begin{proof}
   We consider the Wedderburn-Artin decomposition~\eqref{eq:wedart} of $\mathcal{A}$.
   For every $1 \leqslant i \leqslant \ell$, we denote the center of $\mathcal{A}_i$ by $\mathcal{Z}_i$.
   Then $\dim \mathcal{Z} = \dim \mathcal{Z}_1 + \ldots + \dim \mathcal{Z}_\ell$.
   The number of irreducible factors of a characteristic polynomial of any element of $\mathcal{A}$ will be at least $m_1 + \ldots + m_\ell$, so $d \geqslant m_1 + \ldots + m_\ell$.
   A direct computation using the Schur's lemma~\cite[Theorem~2.1.1]{kir_drozd} implies that the centralizer $\mathcal{C}$ of $\mathcal{A}$ is isomorphic to $\operatorname{Mat}_{m_1}(\mathcal{Z}_1) \times \ldots \times \operatorname{Mat}_{m_\ell}(\mathcal{Z}_\ell)$, where $\operatorname{Mat}_{m_i}(\mathcal{Z}_i)$ denotes the space of $m_i\times m_i$-block matrices with each block being an element of $\mathcal{Z}_i$ (cf. \cite[Theorem~2.6.4]{kir_drozd}).
   Therefore
   \[
     \dim \mathcal{C} = m_1^2 \dim \mathcal{Z}_1 + m_2^2 \dim \mathcal{Z}_2 + \ldots + m_\ell^2 \dim \mathcal{Z}_\ell.
   \]
   Bounding the right-hand side, we can write
   \[
     \dim\mathcal{C} \leqslant (m_1 + \ldots + m_\ell)^2 (\dim \mathcal{Z}_1 + \ldots + \dim\mathcal{Z}_\ell) \leqslant d^2 \dim \mathcal{Z}
   \]
   Both inequalities will be equalities if and only if $\ell = 1$ and $d = m_1$, and this proves the lemma.
\end{proof}


\subsection{Search for invariant subspaces: how to find one}
In this subsection, we present~\Cref{alg:one_space_full} for finding an invariant subspace if there is any.
The rest of the subsection is devoted to justifying its correctness and termination, see~\Cref{prop:alg_correct}.

\begin{algorithm}
\caption{Finding a nontrivial invariant subspace of an algebra}
\label{alg:one_space_full}
\begin{description}[itemsep=0pt]
\item[Input ] a basis $B_1, \ldots, B_N \in \mathbb{Q}^{n \times n}$ of an algebra $\mathcal{A} \subseteq \mathbb{Q}^{n \times n}$;
\item[Output ] One of the following:
\begin{itemize}
    \item \texttt{NO} if there is no subspace in $\overline{\mathbb{Q}}^n$ invariant w.r.t. $\mathcal{A}$;
    \item nontrivial proper subspace in $\mathbb{Q}^n$ invariant w.r.t. $\mathcal{A}$;
    \item a maximal chain of subspaces in $\overline{\mathbb{Q}}^n$ invariant w.r.t. $\mathcal{A}$.
\end{itemize}
\end{description}

\begin{enumerate}[label = \textbf{(Step~\arabic*)}, leftmargin=*, align=left, labelsep=2pt, itemsep=4pt]
\item[]
    \hspace{-14mm}\emph{Considering corner cases:}
    \item\label{step:full_alg} If $N = n^2$, return \texttt{NO}.
    \item\label{step:full_saturation} For an arbitrary nonzero vector $v$, consider the space $V$ spanned by $B_1v,\ldots, B_Nv$.
    If $\dim V < n$, \textbf{return} $V$.
    
    \smallskip
    \hspace{-14mm}\emph{Examining the radical:}
    \item Find a basis of the radical $\mathcal{R}$ of $\mathcal{A}$ (\Cref{def:radical}) using Dixon's theorem~\cite[Theorem~11]{constructive_survey}.
    \item\label{step:full_radical} If $\dim \mathcal{R} > 0$ compute the common kernel of the basis elements of $\mathcal{R}$ and \textbf{return} it (see~\Cref{lemma:radical-kernel}).
    
    \smallskip
    \hspace{-14mm}\emph{Semisimple case:}
    \item\label{step:full_semisimple} Set $D := 1$.
    \item\label{step:full_sampling1} Compute $M := \sum_{i = 1}^N a_i B_i$, where $a_1, \ldots, a_N$ are sampled independently and uniformly at random from $\{1, 2, \ldots, D\}$.
    \item\label{step:reducible} If the characteristic polynomial of $M$ has at least two distinct $\mathbb{Q}$-irreducible factors (say, $p_1(t)$ and $p_2(t)$):
    \begin{enumerate}[ref= \theenumi (\alph*)]
        \item Check the invariance of $\operatorname{Ker}p_1(M)$ w.r.t. $B_1, \ldots, B_N$.
        \item\label{step:full_decomposable} If it is invariant, \textbf{return} $\operatorname{Ker}p_1(M)$. Otherwise, set $D := 2D$ and \textbf{go to}~\ref{step:full_sampling1}.
    \end{enumerate}
    \item Write the characteristic polynomial of $M$ as $p(t)^d$, where $p(t)$ is $\mathbb{Q}$-irreducible.
    \item\label{step:full_center} Compute the center $\mathcal{Z}$ and centralizer~$\mathcal{C}$ of $\mathcal{A}$ (\Cref{def:center}).
    \item If $\dim \mathcal{C} < d^2 \dim\mathcal{Z}$, set $D := 2D$ and \textbf{go to}~\ref{step:full_sampling1}.
    \item\label{step:sampling_center} Let $C_1, \ldots, C_s$ be a basis of $\mathcal{C}$.
    Set $C := \sum_{i = 1}^s b_i C_i$, where $b_1, \ldots, b_s$ are sampled independently and uniformly at random from $\{1, 2, \ldots, D\}$. 
    \item\label{step:full_compute_q} Compute $q(t)$, the minimal polynomial of $C$. If $q$ is $\mathbb{Q}$-reducible or $\deg q < d \dim \mathcal{Z}$, set $D := 2D$ and \textbf{go to}~\ref{step:full_sampling1}.
    \item\label{step:eigenspaces} Let $V_1, \ldots, V_\ell$\\  
    (where $\ell = d \dim \mathcal{Z}$) be the eigenspaces of $C$.
    \item\label{step:full_return_chain} \textbf{Return} $V_1 \subset V_1\oplus V_2 \subset \ldots \subset V_1 \oplus V_2 \oplus\ldots \oplus V_{\ell - 1}$.
\end{enumerate}
\end{algorithm}

\begin{proposition}\label{prop:chpoly}
    Let $\mathcal{A} \subseteq \mathbb{Q}^{n \times n}$ be a semisimple algebra such that there are no nontrivial proper $\mathcal{A}$-invariant subspaces in $\mathbb{Q}^n$.
    Let $M_1, \ldots, M_N$ be a basis of $\mathcal{A}$ as a vector space.
    Then the polynomial
    \[
      \operatorname{det} (x_1M_1 + \ldots + x_N M_N) \in \mathbb{Q}[x_1, \ldots, x_N]
    \]
    is of the form $P^d$, where $P$ is irreducible over $\mathbb{Q}$.
\end{proposition}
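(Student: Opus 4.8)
The plan is to work out the structure of $\mathcal{A}$ after base change to $\overline{\mathbb{Q}}$ and then descend back to $\mathbb{Q}$ using the Galois action. First, I would reduce to the case where $\mathcal{A}$ is \emph{simple}. In the Wedderburn--Artin decomposition~\eqref{eq:wedart} of the semisimple algebra $\mathcal{A}$, if $\ell\geqslant 2$, or if some $m_i\geqslant 2$, then the obvious coordinate blocks already produce a nontrivial proper $\mathcal{A}$-invariant subspace inside $\mathbb{Q}^n$, which the hypothesis forbids; hence $\ell=1$, $m_1=1$, so $\mathbb{Q}^n$ is a simple faithful $\mathcal{A}$-module. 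By Schur's lemma and the double-centralizer theorem, $\mathcal{A}$ is then a simple algebra whose center $Z$ is a number field; set $z:=[Z:\mathbb{Q}]$ and let $s$ be defined by $\dim_Z\mathcal{A}=s^2$, so $N=\dim_\mathbb{Q}\mathcal{A}=zs^2$.

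Next, I base-change to $\overline{\mathbb{Q}}$. Since $Z\otimes_\mathbb{Q}\overline{\mathbb{Q}}\cong\prod_\sigma\overline{\mathbb{Q}}$ over the $z$ embeddings $\sigma\colon Z\hookrightarrow\overline{\mathbb{Q}}$, and every central simple $\overline{\mathbb{Q}}$-algebra splits, the algebra $\mathcal{B}:=\mathcal{A}\otimes_\mathbb{Q}\overline{\mathbb{Q}}$ decomposes as $\mathcal{B}\cong\prod_\sigma\operatorname{Mat}_s(\overline{\mathbb{Q}})$. Correspondingly $\overline{\mathbb{Q}}^n=\bigoplus_\sigma W_\sigma$, where the $\sigma$-th factor acts on $W_\sigma$ as $\mu$ copies of the standard $s$-dimensional module and as $0$ on the remaining summands; here $\mu\geqslant 1$ because $\mathbb{Q}^n$ is faithful, and $\mu$ is the same for every $\sigma$ because the central idempotents of the factors of $\mathcal{B}$ are Galois-conjugate and Galois conjugation on $\overline{\mathbb{Q}}^n=\mathbb{Q}^n\otimes\overline{\mathbb{Q}}$ preserves ranks. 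Writing $\widetilde{M}_j^{(\sigma)}\in\operatorname{Mat}_s(\overline{\mathbb{Q}})$ for the $\sigma$-component of the image of $M_j$ in $\mathcal{B}$, and using that the determinant is unchanged by base change, we compute $\det(\sum_j x_jM_j)$ on $\overline{\mathbb{Q}}^n=\bigoplus_\sigma W_\sigma$ block by block:
\[
  \det(x_1 M_1 + \ldots + x_N M_N) = \prod_{\sigma}\Delta_\sigma(x)^{\mu},\qquad
  \Delta_\sigma(x) := \det\!\Bigl(\textstyle\sum_{j} x_j\,\widetilde{M}_j^{(\sigma)}\Bigr).
\]

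Now I analyze and descend the factors. Because $M_1,\ldots,M_N$ is a $\mathbb{Q}$-basis of $\mathcal{A}$, the $\widetilde{M}_j$ form an $\overline{\mathbb{Q}}$-basis of $\mathcal{B}$, so $x\mapsto(\widetilde{M}_j^{(\sigma)}\text{-components})$ is a linear isomorphism $\overline{\mathbb{Q}}^N\xrightarrow{\sim}\prod_\sigma\operatorname{Mat}_s(\overline{\mathbb{Q}})$. Hence, after an invertible linear substitution in the $x$'s, each $\Delta_\sigma$ becomes the determinant of an $s\times s$ matrix of independent indeterminates, drawn from a block of coordinates disjoint from those entering the other $\Delta_{\sigma'}$; the generic determinant is irreducible over any field, so each $\Delta_\sigma$ is irreducible over $\overline{\mathbb{Q}}$ and distinct $\Delta_\sigma$ are pairwise non-associate. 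Finally, $\operatorname{Gal}(\overline{\mathbb{Q}}/\mathbb{Q})$ acts transitively on $\{\Delta_\sigma\}$: it permutes the embeddings of the number field $Z$ transitively, and since the $\widetilde{M}_j$ are $\mathbb{Q}$-rational, $\tau\in\operatorname{Gal}(\overline{\mathbb{Q}}/\mathbb{Q})$ carries the $\sigma$-block of $\mathcal{B}$ onto the $(\tau\circ\sigma)$-block, so $\tau(\Delta_\sigma)=\Delta_{\tau\circ\sigma}$. Therefore $P:=\prod_\sigma\Delta_\sigma$ is $\operatorname{Gal}(\overline{\mathbb{Q}}/\mathbb{Q})$-invariant, hence (after rescaling by a nonzero rational) $P\in\mathbb{Q}[x_1,\ldots,x_N]$, and $P$ is irreducible over $\mathbb{Q}$, since any nontrivial factorization of $P$ over $\mathbb{Q}$ would split $\{\Delta_\sigma\}$ into two nonempty Galois-stable subsets, contradicting transitivity. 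Taking $d:=\mu$ yields $\det(x_1M_1+\ldots+x_NM_N)=P^d$, as claimed.

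The main obstacle is the base-change step: establishing the product decomposition of $\mathcal{A}\otimes\overline{\mathbb{Q}}$ together with the matching decomposition $\overline{\mathbb{Q}}^n=\bigoplus_\sigma W_\sigma$ of the module, and in particular verifying that the multiplicity $\mu$ is positive and the same across the factors, and then checking that the Galois action genuinely permutes the blocks (hence the $\Delta_\sigma$) transitively. The only other nontrivial input is the classical irreducibility of the generic determinant polynomial; the reduction to the simple case and the final descent are then routine.
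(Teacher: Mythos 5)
Your proposal is correct, and while it shares the paper's overall skeleton (split the algebra over the algebraic closure via Wedderburn--Artin, write $\det(\sum_j x_jM_j)$ as a product of powers of generic $s\times s$ determinants, and invoke the irreducibility of the generic determinant and the fact that the $M_j$ form a \emph{basis} so that the blocks consist of independent linear forms), it handles the crucial descent step in a genuinely different way. The paper argues by contradiction: it normalizes $M_1 = I$, takes a hypothetical $\mathbb{Q}$-irreducible factor $p$ that is a \emph{proper} sub-product of the $\overline{\mathbb{Q}}$-irreducible factors, and shows that $\operatorname{Ker}\,p_0(M)$ is a nontrivial $\mathcal{A}$-invariant subspace defined over $\mathbb{Q}$, contradicting the hypothesis directly; the multiplicities being equal then falls out of the conclusion that every $\mathbb{Q}$-irreducible factor is the full product. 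You instead first observe that the hypothesis makes $\mathcal{A}$ a central simple algebra over a number field $Z$, index the $\overline{\mathbb{Q}}$-blocks by the embeddings of $Z$, and obtain both the transitivity of the Galois action on the factors $\Delta_\sigma$ (hence $\mathbb{Q}$-irreducibility of their product) and the equality of the multiplicities $\mu_\sigma$ from the Galois permutation of the central idempotents. Your route requires more machinery (splitting of central simple algebras, Galois action on idempotents, and a small Hilbert-90-type normalization to make $P$ itself rational, which you wave at with ``after rescaling''), but it is constructive rather than by contradiction and yields extra structural information: the number of $\overline{\mathbb{Q}}$-factors is $[Z:\mathbb{Q}]$ and $d=\mu$ is the Schur index of the underlying division algebra. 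The paper's argument is more elementary and self-contained, using only linear algebra on $\operatorname{Ker}\,p_0(M)$ and the invariant-subspace hypothesis. Both glosses (yours on scalar normalization, the paper's on why a $\mathbb{Q}$-irreducible divisor is squarefree over $\mathbb{C}$) are routine in characteristic zero.
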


\begin{remark}[On the importance of being a basis]
  While the statement of \Cref{prop:chpoly} may sound quite natural, the situation is in fact quite subtle: if one replaces linear basis with a set of generators of $\mathcal{A}$ in the statement of the proposition, it will not longer be true~\cite[Theorem 1.2 and Subsection 2.1]{KlepVolvic}.
\end{remark}

\begin{proof}[Proof of~\Cref{prop:chpoly}]
    By performing a change of coordinates over $\mathbb{Q}$, we will assume that $M_1$ is the identity matrix.

    Let $\overline{\mathcal{A}}$ be the complexification of $\mathcal{A}$.
    By the Wedderburn-Artin theorem~\cite[Corollary 2.4.4]{kir_drozd}, there exist $n_1, \ldots, n_\ell$ such that $N = n_1^2 + \ldots + n_\ell^2$ and \begin{equation}\label{eq:decom_alg}
        \overline{\mathcal{A}} \cong \operatorname{Mat}_{n_1}(\mathbb{C}) \times \ldots \times \operatorname{Mat}_{n_\ell}(\mathbb{C}).
    \end{equation}

    Then the complexification $\mathbb{C}^n$ of the original representation $\mathbb{Q}^n$ of $\mathcal{A}$ can be decomposed~\cite[Theorem~2.6.2]{kir_drozd} as
    \begin{equation}\label{eq:decomp_module}
        \mathbb{C}^n = k_1V_{1} \oplus k_2V_2 \oplus \ldots \oplus k_\ell V_\ell,
    \end{equation}
    where $V_i \cong \mathbb{C}^{n_i}$ is the unique irreducible representation of $\operatorname{Mat}_{n_i}(\mathbb{C})$.
    We denote the base change corresponding to~\eqref{eq:decomp_module} by $C \in \mathbb{C}^{n\times n}$.
    Then $CMC^{-1}$, where $M := x_1M_1 + \ldots + x_NM_N$, is block diagonal with the dimensions of blocks as in~\eqref{eq:decomp_module}. Since $M_1, \ldots, M_N$ span the whole $\overline{\mathcal{A}}$, the distinct nonzero entries of $CMC^{-1}$ are $\mathbb{C}$-linearly independent linear forms in $x_1, \ldots, x_N$. 
    By denoting these forms by $y_1, \ldots, y_N$ we obtain an
     invertible matrix $B \in \mathbb{C}^{N\times N}$ such that $\mathbf{y} := B\mathbf{x}$, and, reordering $y_1, \ldots, y_N$ if necessary, one has
    \[
      CMC^{-1} = \operatorname{diag}(\underbrace{Y_1, \ldots, Y_1}_{k_1 \text{times}}, \ldots, \underbrace{Y_\ell, \ldots, Y_\ell}_{k_\ell \text{times}}),
    \]
    where $Y_i$ is a matrix with entries $y_{n_1 + \ldots + n_{i - 1}^2 + 1}, \ldots, y_{n_1^2 + \ldots + n_i^2}$ for $1 \leqslant i \leqslant \ell$.

    Then we have
    \[
      \operatorname{det}(M) = \operatorname{det}(CMC^{-1}) = \operatorname{det}(Y_1)^{k_1} \ldots \operatorname{det}(Y_\ell)^{k_\ell}.
    \]
    Furthermore, since $M_1$ is the identity, $\det(M)|_{x_1 = x_1 + t}$ as a polynomial in $t$ is the characteristic polynomial of $-M$.
    Let $Q(\mathbf{x}) := \det Y_1\ldots \det Y_\ell \in \mathbb{Q}[\mathbf{x}]$.
    Then $Q|_{x_1 = x_1 + t}$ as a polynomial in $t$ is the minimal polynomial of $-M$.
    
    Since $\det Y_i$ is a determinant of a matrix with independent entries, it is irreducible over $\mathbb{C}$.
    Let $p(\mathbf{x})$ be a $\mathbb{Q}$-irreducible divisor of $\det M$.
    Then $p$ divides $Q$, so, by reordering $Y_i$'s if necessary, we can assume that $p(\mathbf{x}) = \det Y_1\ldots \det Y_r$ for $r \leqslant \ell$.
    Assume that $r < \ell$.
    Set $p_0(t) := p(x_1 - t, x_2, \ldots, x_N)$ and consider $p_0(M)$.
    We will have 
    \begin{multline*}
      C p_0(M) C^{-1} = \\ \operatorname{diag}(\underbrace{0, \ldots, 0}_{k_1 + \ldots + k_r \text{ times}}, \underbrace{p_0(Y_{r + 1}), \ldots, p_0(Y_{r + 1})}_{k_{r + 1} \text{ times}}, \ldots, \underbrace{p_0(Y_{r + 1}), \ldots, p_0(Y_{r + 1})}_{k_\ell \text{ times}}).
    \end{multline*}
    Since $p_0$ is coprime with the characteristic polynomials of $Y_{r + 1}, \ldots, Y_\ell$, each of the matrices $p_0(Y_{r + 1}), \ldots, p_0(Y_\ell)$ is nonsingular.
    Therefore, the kernel of $C p_0(M) C^{-1}$ is exactly the span of the first $k_1 + \ldots + k_r$ basis vectors.
    Therefore, the kernel of $p_0(M)$ is the span of this many first columns of $C^{-1}$.
    Therefore, the kernel of $p_0(M)$ is $\overline{\mathcal{A}}$-invariant and is defined over $\mathbb{C}$.
    On the other hand, the entries of $p_0(M)$ belong to $\mathbb{Q}(\mathbf{x})$, so the kernel of $p_0(M)$ in fact is defined over $\mathbb{C} \cap \mathbb{Q}(\mathbf{x}) = \mathbb{Q}$.
    Therefore, the kernel of $p_0(M)$ yields a nontrivial $\mathcal{A}$-invariant subspace of $\mathbb{Q}^n$ contradicting with the irreducibility of this representation.
    Hence $p$ must be equal to $Q$, so $\det M$ must be a power of $p$.
\end{proof}

The proof of the proposition provides a way to find the degree of $\deg P$.

\begin{corollary}\label{cor:Pdegree}
    In the notation of the proof (see~\eqref{eq:decom_alg}) of~\Cref{prop:chpoly}, $\deg P = n_1 + n_2 + \ldots + n_\ell$.
\end{corollary}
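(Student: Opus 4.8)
The plan is to read off $\deg P$ directly from the factorization of $\det M$ constructed in the proof of \Cref{prop:chpoly}, so almost all of the work has in fact already been done there. First I would recall the key intermediate conclusion of that proof: after passing to the complexification $\overline{\mathcal{A}}$ and applying Wedderburn--Artin, one obtains $\det M = \det(Y_1)^{k_1}\cdots\det(Y_\ell)^{k_\ell}$, where $Y_i$ is an $n_i\times n_i$ matrix, and the argument ruling out the case $r<\ell$ shows that the $\mathbb{Q}$-irreducible polynomial $P$ (there denoted $p$) with $\det M = P^d$ in fact equals $Q = \det Y_1\cdots\det Y_\ell$. Consequently $\deg P = \sum_{i=1}^\ell \deg\det Y_i$.

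The second step is to compute each $\deg\det Y_i$. The entries of $Y_i$ are among the linear forms $y_1,\ldots,y_N$, which are $\mathbb{C}$-linearly independent combinations of $x_1,\ldots,x_N$ obtained via the invertible matrix $B$; hence the entries of $Y_i$ are algebraically independent, and $\det Y_i$ is a nonzero homogeneous polynomial of degree $n_i$ in $\mathbf{x}$ (the determinant of an $n_i\times n_i$ matrix is a degree-$n_i$ form in its entries, and nonvanishing together with the degree is preserved because the coordinate change $\mathbf{y}=B\mathbf{x}$ is invertible). Summing over $i$ gives $\deg P = n_1 + n_2 + \cdots + n_\ell$.

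I do not anticipate a genuine obstacle here: the statement is essentially a bookkeeping consequence of the Wedderburn--Artin decomposition already established in the proof of \Cref{prop:chpoly}. The only point deserving a line of care is the degree count when passing from the entries of $Y_i$ (which are themselves linear in the $x_j$'s) to $\det Y_i$ viewed as a polynomial in $\mathbf{x}$, but invertibility of $B$ guarantees that the degree does not drop.
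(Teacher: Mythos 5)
Your proposal is correct and follows exactly the route the paper intends: the paper gives no separate proof of this corollary, simply noting that it is read off from the proof of \Cref{prop:chpoly}, where $P$ is identified with $Q=\det Y_1\cdots\det Y_\ell$ and each $\det Y_i$ has degree $n_i$ because the entries of the $Y_i$ are independent linear forms related to $\mathbf{x}$ by the invertible matrix $B$. Your extra sentence justifying that the degree does not drop under the change of variables is a reasonable bit of added care, not a deviation.
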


\begin{proposition}\label{prop:alg_correct}
  \Cref{alg:one_space_full} is correct and terminates with probability one.
\end{proposition}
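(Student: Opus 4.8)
The plan is to verify two things independently: (a) every \textbf{return} statement in \Cref{alg:one_space_full} outputs an object of the advertised type, and (b) the algorithm almost surely reaches a \textbf{return}. The corner cases settle (a) in part. If $N=n^2$ then $\mathcal{A}=\mathbb{Q}^{n\times n}$, which is simple and split, hence acts irreducibly already on $\overline{\mathbb{Q}}^n$, so \texttt{NO} is correct; I will check at the end that $N<n^2$ always forces an invariant subspace over $\overline{\mathbb{Q}}$, so \texttt{NO} is output exactly when it should be. In \ref{step:full_saturation} the span of $B_1v,\dots,B_Nv$ equals $\mathcal{A}v$ (the $B_i$ span $\mathcal{A}$), hence is $\mathcal{A}$-invariant and contains $v\neq 0$, so when its dimension is $<n$ it is a proper nontrivial invariant subspace over $\mathbb{Q}$. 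In \ref{step:full_radical}, \Cref{lemma:radical-kernel} applies and the returned common kernel is proper because $\mathcal{R}\neq\{0\}$. Past these steps $\mathcal{A}$ is semisimple, and I invoke the Wedderburn--Artin decomposition (\Cref{thm:wedart}), writing $\mathcal{A}=\operatorname{Diag}\!\big(\mathcal{A}_1^{(m_1)},\dots,\mathcal{A}_\ell^{(m_\ell)}\big)$ acting on $\mathbb{Q}^n=W_1^{m_1}\oplus\dots\oplus W_\ell^{m_\ell}$ with each $W_i$ a simple faithful $\mathcal{A}_i$-module.

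For $M=\sum a_iB_i$, its characteristic polynomial on $\mathbb{Q}^n$ is $\prod_i\chi_i^{m_i}$, where $\chi_i$ is the characteristic polynomial of the $\mathcal{A}_i$-component of $M$ on $W_i$. By \Cref{prop:chpoly} applied to each $\mathcal{A}_i$, for a Zariski-generic $M$ every $\chi_i$ is a power of a single $\mathbb{Q}$-irreducible $p_i$, the $\mathcal{A}_i$-component of $M$ is a semisimple operator (so $p_i$ annihilates $W_i^{m_i}$), and the $p_i$ are pairwise distinct. If $\ell\geqslant 2$ such an $M$ triggers \ref{step:reducible}, and coprimality of $p_1$ with the remaining $\chi_j$ gives $\operatorname{Ker}p_1(M)=W_1^{m_1}$, a proper nonzero $\mathcal{A}$-invariant subspace over $\mathbb{Q}$. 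The explicit invariance check inside \ref{step:reducible} guarantees that, even for a non-generic $M$, whatever is returned there is invariant, and it is automatically nonzero (as $p_1\mid$ char.\ poly.) and proper (a second distinct factor $p_2$ also divides the minimal polynomial, so $\operatorname{Ker}p_2(M)\neq 0$ meets $\operatorname{Ker}p_1(M)$ only in $0$). Hence every return up to \ref{step:reducible} is correct.

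If the characteristic polynomial is a single prime power $p(t)^d$, I use \Cref{lem:equal_blocks_crit} to see that a generic such $M$ passes the dimension test following \ref{step:full_center} precisely when $\ell=1$. In that case the double-centralizer theorem gives $\mathcal{C}\cong\operatorname{Mat}_{m_1}(\operatorname{End}_{\mathcal{A}_1}(W_1))$, again semisimple, and applying \Cref{prop:chpoly} and \Cref{cor:Pdegree} to $\mathcal{C}$ shows that the minimal polynomial of a generic $C\in\mathcal{C}$ is $\mathbb{Q}$-irreducible of degree exactly $d\dim\mathcal{Z}$, so \ref{step:full_compute_q} is passed; conversely every element of $\mathcal{C}$ has minimal polynomial of degree at most $d\dim\mathcal{Z}$, so passing \ref{step:full_compute_q} forces equality. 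Such a $C$ is then a semisimple operator with exactly $d\dim\mathcal{Z}$ distinct, Galois-conjugate eigenvalues over $\overline{\mathbb{Q}}$; by \Cref{lem:center} its eigenspaces $V_1,\dots,V_{d\dim\mathcal{Z}}$ are $\mathcal{A}$-invariant, they have equal dimension (being Galois-conjugate), and comparing this dimension with the composition length of $\overline{\mathbb{Q}}^n$ as an $\overline{\mathcal{A}}$-module shows it equals the dimension of a simple $\overline{\mathcal{A}}$-module. Hence each $V_j$ is simple and the flag returned in \ref{step:full_return_chain} is a composition series of $\overline{\mathbb{Q}}^n$, i.e.\ a maximal chain of invariant subspaces --- and this holds whenever \ref{step:full_return_chain} is reached, not only for generic samples. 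The same structural facts close the converse in the \texttt{NO} discussion: if $N<n^2$ then either $\mathcal{A}$ is not semisimple (\Cref{lemma:radical-kernel} gives a $\mathbb{Q}$-invariant subspace), or $\ell\geqslant 2$ or some $m_i>1$ (a $\mathbb{Q}$-invariant subspace again), or $\ell=m_1=1$ with the underlying division algebra of $\mathcal{A}_1$ of $\mathbb{Q}$-dimension $>1$, whence $\overline{\mathcal{A}}$ is a nontrivial product and $\overline{\mathbb{Q}}^n$ is $\overline{\mathcal{A}}$-reducible.

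For termination the only loops are the three ``$D:=2D$, \textbf{go to}~\ref{step:full_sampling1}'' branches, each re-entered with the sampling box doubled. At every pass the sample $(a_1,\dots,a_N)$ (resp.\ $(b_1,\dots,b_s)$) causes a failure only if it lies in a fixed proper Zariski-closed subset of $\mathcal{A}$ (resp.\ $\mathcal{C}$) --- the loci where some component of $M$ is not a semisimple operator, where two of the $\chi_i$ fail to be coprime, where $\dim\mathcal{C}<d^2\dim\mathcal{Z}$, and so on --- together with the thin set on which a specialization of a $\mathbb{Q}$-irreducible multivariate polynomial produced by \Cref{prop:chpoly} becomes $\mathbb{Q}$-reducible. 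Schwartz--Zippel bounds the first contribution and a quantitative form of Hilbert's irreducibility theorem the second, so the failure probability at the $k$-th pass is $O(2^{-ck})$ for some $c>0$; these are summable, and Borel--Cantelli gives that the algorithm almost surely resamples only finitely often and therefore terminates. I expect the real difficulty to be the structural bookkeeping of the previous paragraph: showing that the dimension and degree tests after \ref{step:full_center} and in \ref{step:full_compute_q} isolate exactly the single-Wedderburn-block situation in which the eigenspace flag of a generic centralizer element is a full composition series. This is where \Cref{prop:chpoly}, \Cref{cor:Pdegree}, \Cref{lem:center}, \Cref{lem:equal_blocks_crit} and Schur's lemma all have to be combined, and where one must keep in mind that an individual element of a semisimple algebra need not be a semisimple operator --- precisely the reason \ref{step:reducible} re-checks invariance and the algorithm must occasionally resample.
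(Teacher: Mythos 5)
Your proof is correct and follows essentially the same route as the paper's: a case analysis over the return statements using \Cref{lemma:radical-kernel}, the Wedderburn--Artin decomposition, \Cref{prop:chpoly}, \Cref{cor:Pdegree}, \Cref{lem:center} and \Cref{lem:equal_blocks_crit} for correctness, and Schwartz--Zippel together with a quantitative Hilbert irreducibility bound (the paper cites Cohen) over a doubling sampling box for almost-sure termination. The only notable, harmless additions are your explicit properness argument for $\operatorname{Ker}p_1(M)$ in \ref{step:reducible} and the converse observation that $N<n^2$ forces reducibility over $\overline{\mathbb{Q}}$, neither of which the paper spells out.
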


\begin{remark}[On the probability of termination]
    By ``terminates with probability one'' we mean that the algorithm makes a random choice in an infinite probability space, and will terminate for all the choices except for a set of probability zero. 
    Simply put, the algorithm repeatedly tosses a coin and will terminate as long as there will be at least one heads which will eventually happen with probability one.
\end{remark}

\begin{proof}[Proof of~\Cref{prop:alg_correct}]
   We will first prove the \emph{correctness}.
   If the algorithm returned on~\ref{step:full_alg}, then $\mathcal{A}$ is the full matrix algebra, and does not have any nontrivial proper invariant subspace.
   If the algorithm returned on~\ref{step:full_saturation}, then the returned subspace is invariant by construction.
   If the algorithm returned on~\ref{step:full_radical}, the returned subspace is nonzero, proper (since if $\mathcal{A}$ was a zero algebra, the algorithm would have returned at~\ref{step:full_saturation}), and invariant due to~\Cref{lemma:radical-kernel}.
   
   It remains to consider the case when the algorithm returns after~\ref{step:full_semisimple}.
   If the algorithm returned on~\ref{step:full_decomposable}, then the returned subspace is invariant by construction and is nonzero because $p_1(t)$ divides the charpoly of $M$, so $p_1(M)$ is a singular matrix.
   Finally, consider the case when the algorithm returned on~\ref{step:full_return_chain}.
   Consider the decomposition~\eqref{eq:wedart} from~\Cref{thm:wedart} for $\mathcal{A}$.
   If the algorithm reached~\ref{step:sampling_center}, it contains a matrix with the charpoly being $p(t)^d$ with $\mathbb{Q}$-irreducible $p(t)$ such that $\dim \mathcal{C} = d^2 \dim \mathcal{Z}$.
   \Cref{lem:equal_blocks_crit} implies that, in the decomposition~\eqref{eq:wedart}, we have $\ell = 1$ and $m_1 = d$.
   Thus, the whole space $\mathbb{Q}^n$ can be written as $U_1 \oplus U_2 \oplus \ldots \oplus U_d$ such that each of $U_i$'s is $\mathcal{A}$-invariant without proper nontrivial $\mathcal{A}$-invariant subspaces.
   \cite[Corollary~2.2.4]{kir_drozd} implies that, over $\overline{\mathbb{Q}}$, each of $U_i$'s can be decomposed as a direct sum of at most $\dim\mathcal{Z}$ $\mathcal{A}$-invariant subspaces.
   Therefore, the whole $\overline{\mathbb{Q}}^n$ can be decomposed into at most $d \dim\mathcal{Z}$ $\mathcal{A}$-invariant subspaces by~\cite[Theorem~2.6.2]{kir_drozd}.
   
   \Cref{lem:center} implies that each of $V_i$'s from~\ref{step:eigenspaces} is an invariant subspace w.r.t. $\mathcal{A}$.
   Since there are $d \dim\mathcal{Z}$ of them, each of $V_i$'s does not contain nontrivial proper $\mathcal{A}$-invariant subspaces.
   Therefore, the chain $V_1 \subset V_1\oplus V_2 \subset \ldots \subset V_1 \oplus V_2 \oplus\ldots \oplus V_{\ell - 1}$ returned at~\ref{step:full_return_chain} is maximal.
   This finished the proof of the \emph{correctness} of the algorithm.
   
   We will now prove that the algorithm \emph{terminates} with probability one.
   Consider the decomposition~\eqref{eq:wedart} of $\mathcal{A}$ from~\Cref{thm:wedart}.
   Consider variables $z_1, \ldots, z_N$ and a matrix $M_0 := \sum_{i = 1}^N z_i B_i$.
   Then $M$ at~\ref{step:full_sampling1} is a specialization of $M_0$ at $z_i = a_i$.
   Let $P(z_1, \ldots, z_N, t)$ be the charpoly of $M_0$.
   Consider the decomposition~\eqref{eq:wedart} for $\mathcal{A}$.
   For every $1 \leqslant i \leqslant \ell$, we apply~\Cref{prop:chpoly} to the block corresponding to $\mathcal{A}_i$ and obtain a $\mathbb{Q}$-irreducible $P_i$ and its power $d_i$.
   Thus, we obtain the following factorization for $M_0$
   \[
     P = P_1^{d_1 m_1} P_2^{d_2 m_2}\ldots P_\ell^{d_\ell m_\ell}.
   \]
   The characteristic polynomial of $M$ computed at~\ref{step:full_sampling1} equals $P(a_1, \ldots, a_N, t)$.
   \ul{Assume that $P_i(a_1, \ldots, a_N, t)$ is $\mathbb{Q}$-reducible for every $1 \leqslant i \leqslant s$ and these polynomials are distinct.}
   \begin{itemize}
       \item Assume that $\ell > 1$.
       Then $p_1(t)$ from~\ref{step:reducible} will be equal to $P_i(a_1, \ldots, a_N, t)$ for some $i$.
       Then $\operatorname{Ker} p_1(M)$ will be the subspace corresponding to the $i$-th block in the decomposition~\eqref{eq:wedart}.
       The subspace is invariant, so it will be returned on~\ref{step:full_decomposable}.
       \item Assume that $\ell = 1$.
       We will study matrix $C$ similarly to the way we studied $M$ above. 
       Let $y_1, \ldots, y_s$ be independent variables, and we define $C_0 := y_1 C_1 + \ldots + y_s C_s$.
       By the same argument as in the proof of~\Cref{lem:equal_blocks_crit}, we have $\mathcal{C} \cong \operatorname{Mat}_r(\mathcal{Z})$ for some integer $r$.
       By~\cite[Proposition~2.3.4]{kir_drozd}, algebra $\mathcal{C}$ is simple and every $\mathcal{C}$-module (in particular, our ambient space $\mathbb{Q}^n$) is a direct sum of isomorphic copies of the same $\mathcal{C}$-module.
       We apply~\Cref{prop:chpoly} to this module and deduce that the characteristic  polynomial of $C_0$ is of the form $Q(y_1, \ldots, y_s, t)^h$ for some integer $h$ and $\mathbb{Q}$-irreducible polynomial $Q$.
       Furthermore, $\deg Q = d \dim \mathcal{Z}$ by~\Cref{cor:Pdegree}.
       \ul{Assume that $Q(b_1, \ldots, b_s, t)$ is $\mathbb{Q}$-irreducible.}
       Then $Q(b_1, \ldots, b_s, t)$ will be the minimal polynomial of $C$, so this polynomial will not satisfy the condition of~\ref{step:full_compute_q} and, thus, the algorithm will terminate without going back to~\ref{step:full_sampling1}.
   \end{itemize}
   
   Combining the two underlined assumptions in the text above, we see that the algorithm will return for a fixed value of $D$ if the following conditions hold:
   \begin{enumerate}
       \item $P_i(a_1, \ldots, a_N, t)$ is $\mathbb{Q}$-reducible for every $1 \leqslant i \leqslant s$ and these polynomials are all distinct;
       \item $Q(b_1, \ldots, b_s, t)$ is $\mathbb{Q}$-irreducible.
   \end{enumerate}
   \cite[Theorem~2.1]{Cohen} implies that there exists constants $C_0, C_1$ such that the probability of any of $P_i(a_1, \ldots, a_N, t)$'s and $Q(b_1, \ldots, b_s, t)$ being $\mathbb{Q}$-reducible is less that $\frac{C_1}{\sqrt[3]{D}}$ if $D > C_0$.
   Furthermore, the Schwartz-Zippel lemma~\cite[Proposition~98]{Zippel} implies that there exists a constant $C_2$ such that the probability of any of $P_i(a_1, \ldots, a_N, t)$'s being equal does not exceed $\frac{C_2}{D}$.
   Therefore, for $D > C_0$, the probability that $D$ will be updated is at most $\frac{C_1}{\sqrt[3]{D}} + \frac{C_2}{D}$.
   This number will eventually become less than $0.99$, so the probability of non-termination will be bounded by $0.99 \cdot 0.99 \cdot \ldots = 0$.
\end{proof}
\subsection{Search for invariant subspaces: how to find a chain}\label{sec:chain}

In this section, we describe how to use~\Cref{alg:one_space_full} in a recursive manner to find a maximal chain of invariant subspaces in $\overline{\mathbb{Q}}$ w.r.t. the Jacobian algebra $\mathcal{A} \subset \mathbb{Q}^{n \times n}$ of an ODE system.
We will denote a basis of $\mathcal{A}$ by $B_1, \ldots, B_N$

In the cases when~\Cref{alg:one_space_full} applied to $B_1, \ldots, B_N$ returned \texttt{NO} or a maximal chain of invariant subspaces, we are done.
Therefore, we consider the case when~\Cref{alg:one_space_full} returns a single invariant subspace $V \subset \mathbb{Q}^n$.
In this case, we consider two subproblems:
\begin{enumerate}
    \item \emph{Restriction.} Since $V$ is invariant w.r.t. $B_1, \ldots, B_N$, there are well-defined restrictions $B_1|_V, \ldots, B_N|_V$.
    We fix a basis in $V$ and will denote the matrix representations for these restricted operators also by $B_{1}^\ast, \ldots, B_{N}^\ast$.
    
    \item \emph{Quotients.} Consider the quotient space $\mathbb{Q}^n / V$ and the quotient map $\pi\colon \mathbb{Q}^n \to \mathbb{Q}^n / V$ (see~\cite[3.83, 3.88]{linalg}).
    Since $V$ is invariant w.r.t. $B_1, \ldots, B_N$, we can consider the quotient operators $B_1 / V, \ldots, B_N / V$ (see~\cite[5.14]{linalg}), we denote their matrix representations by $B_1^\circ, \ldots, B_N^\circ$.
    Note that, for every their common invariant subspace $U \subset \mathbb{Q}^n / V$, the subspace $\pi^{-1}(U) \subset \mathbb{Q}^n$ is invariant w.r.t. $B_1, \ldots, B_N$. 
\end{enumerate}
Note that the aforementioned matrix representations can be computed solving linear systems in $n$ variables.
Thus, we can work recursively with algebras $\langle B_1^\ast, \ldots, B_N^\ast \rangle$ on $V$ and $\langle B_1^\circ, \ldots, B_N^\circ \rangle$ on $\mathbb{Q}^n / V$.
If the resulting maximal chains of invariant subspaces are
\[
  0 \subsetneq V_1 \subsetneq \ldots \subsetneq V_s \subsetneq V \quad\text{ and }\quad 0\subsetneq U_1 \subsetneq \ldots \subsetneq U_r \subsetneq \mathbb{Q}^n / V,
\]
then we can return the following maximal chain of invariant subspaces for $B_1, \ldots, B_N$
\[
  0 \subsetneq V_1 \subsetneq \ldots \subsetneq V_s \subsetneq V \subsetneq \pi^{-1}(U_1) \subsetneq \ldots \subsetneq \pi^{-1}(U_r) \subsetneq \mathbb{Q}^n.
\]


\subsection{Putting everything together}\label{sec:general_algo}

In this section we collect the subroutines from the preceding sections into the complete algorithm for finding a maximal chain of lumpings.

\begin{algorithm}[H]
\caption{Finding a maximal chain of lumpings}
\label{alg:main}
\begin{description}[itemsep=0pt]
\item[Input ] ODE system $\mathbf{x}' = \mathbf{f}(\mathbf{x})$ with $\mathbf{x} = (x_1, \ldots, x_n)$, $\mathbf{f} = (f_1, \ldots, f_n) \in \mathbb{Q}[\mathbf{x}]^n$;
\item[Output ] a maximal chain of lumpings (see~\Cref{def:chain} and~\Cref{ex:chain});
\end{description}

\begin{enumerate}[label = \textbf{(Step~\arabic*)}, leftmargin=*, align=left, labelsep=2pt, itemsep=4pt]
    \item Compute the Jacobian $J(\mathbf{x})$ of $\mathbf{f}$ and the matrices $J_1, \ldots, J_\ell \in \mathbb{Q}^{n \times n}$ from its decomposition as in~\eqref{eq:jacobian_decomposition}.
    \item \label{step:algmain:2} Use~\Cref{alg:algebra-basis} to compute the basis $B_1, \ldots, B_N$ of the Jacobian algebra $\mathcal{A} = \langle I_n, J_1, \ldots, J_\ell \rangle$ of the system.
    \item \label{step:algmain:3} Apply~\Cref{alg:one_space_full} in a recursive way as decribed in~\Cref{sec:chain} to compute a maximal chain $V_1 \subsetneq \ldots \subsetneq V_s$ of subspaces in $\overline{\mathbb{Q}}^n$ invariant w.r.t. $\mathcal{A}$.
    \item For each $1 \leqslant i \leqslant s$, find a matrix $L_i$ with the columns being a basis of $V_i$.
    \item \textbf{Return} $L_1, \ldots, L_s$.
\end{enumerate}
\end{algorithm}


\section{Implementation and performance}\label{sec:performance}

We have implemented \Cref{alg:main} (and all the algorithms it relies on) in Julia language \cite{bezanson2017julia} as a part of \texttt{ExactODEReduction.jl} package. The package together with relevant resources to replicate our results is freely available at
\begin{center}
    \url{\repository}
\end{center}

We use libraries \texttt{AbstractAlgebra.jl}
and \texttt{Nemo.jl}~\cite{Nemo.jl-2017}. Internally, this results in using \texttt{FLINT}~\cite{Hart2010} and \texttt{Calcium}~\cite{Calcium-arxiv} (for complex number arithmetic). 
We use a version of the code from~\cite{pozitivizor} to improve interpretability of the computed lumpings. 
Additionaly, during the development stage, various components of the package were profiled on collections of sparse matrices from the SuiteSparse dataset~\cite{suitesparse}.
Our implementation accepts models typed manually or from the files in the \texttt{ERODE} *.ode format~\cite[Section 3.2]{erode}.
We provide documentation, 
installation instructions, and usage examples.

\begin{remark}[Encoding scalar parameters]\label{rem:parameters}
    Models in the literature often involve scalar parameters. 
    Our algorithm transforms each such parameter $k$ into a state $k(t)$ satisfying equations $k'(t) = 0$ 
    (same transformation is used in \texttt{ERODE} under the name ``currying''~\cite[p. 15]{erode_manual}).
    This way, one may also find reductions in the parameters space, not only in the state space.
    Another approach to handle the parameters could be to adjoin them to the field of coefficients (as allowed in {\sc CLUE}) thus allowing parameters in the coefficients of a reduction.
    This feature is not implemented at the moment but most of the presented theory and algorithms can be reused for this case.
\end{remark}

We will demonstrate the performance of our implementation on a set of benchmarks\footnote{Models are available at~\url{\repository/tree/main/data/ODEs}, commit hash \texttt{678d32c5bbc8beedc9e22b673238cde0ec673a46}.}. 
We use benchmarks from the BioModels database~\cite{BioModels2020} collected in~\cite{largescale} of dimensions ranging from $4$ to $133$. We run \Cref{alg:main} over rationals on each of the models. \Cref{table:benchmarks} contains benchmark results aggregated by models' dimension. For each range, we report:
\begin{itemize}
    \item the number of models considered;
    \item the (average) length of a chain of reductions found;
    \item 
    the (average) number of nonequivalent reductions, where equivalence is taken up to adding states with constant dynamics.
    We have chosen to report this because we think is it a reasonable first approximation to the number of 
    interesting reductions;
    \item the (minimum, average, maximum) elapsed runtime of our implementation;
\end{itemize}

\begin{table}[!htbp]
    \setlength{\tabcolsep}{4pt}
  \centering
    \begin{tabular}{cc|cc|ccc}
    \hline
    \multicolumn{2}{c}{Models info} &
    \multicolumn{2}{c}{Reductions} &
    \multicolumn{3}{c}{Runtime (sec.)}\\
    Dimension & \# Models &
    \# Total & \# Non-equivalent & Min. & Average & Max.
    \\
    \hline
    2 - 9 & 44& 4.02& 1.39 & 0.0 s& 0.6 s& 0.66 s\\
    10 - 19 & 41& 8.15& 2.61& 0.01 s& 0.21 s& 1.46 s\\
    20 - 29 & 46& 9.65& 2.13& 0.08 s& 0.44 s& 1.48 s\\
    30 - 39 & 17& 19.41& 2.71 & 0.33 s& 1.74 s& 5.91 s\\
    40 - 59 & 25& 29.08& 6.08 & 0.78 s& 4.58 s& 26.71 s\\
    60 - 79 & 20& 37.25& 6.95& 7.7 s& 34.57 s& 102.92 s\\
    80 - 99 & 11& 42.91& 7.09& 24.46 s& 96.38 s& 497.26 s\\
    100 - 133 & 4& 89.0& 21.5& 75.15 s& 202.52 s& 312.02 s\\
    \hline
    \end{tabular}
  \caption{Benchmark results aggregated by model dimension}
  \label{table:benchmarks}
\end{table}

The timings were produced on a laptop with 2 cores 1.60GHz each and 8 Gb RAM\footnote{For the overall table, we refer to~\url{\repository/blob/main/benchmark/biomodels_benchmark_results.md}, commit hash \texttt{23c9f532aa316cbef59a8e3e6be04156a3d9c3eb}.}.
We would like to note that out of the 208 models considered, at least one reduction was found in 202 models, and 154 of them admit a non-constant reduction. 

The timings in the table do not include the cost of the positivization step~\cite{pozitivizor}, which is optional. Here, our algorithm uses the \texttt{Polymake}~\cite{polymake:2017} library. 
With the positivization step, the running time increases no more than by a factor of two in most instances, and usually the increase is indistinguishable at all\footnote{One notable exception are models that admit large reductions with large coefficients. For example, model \texttt{BIOMD0000000153} of dimension 76 has 22 nontrivial reductions of dimensions $55$ and more, and applying the positivization routine increases the total runtime from 40 s to 1240 s.}. In the earlier versions of the implementation of~\Cref{alg:main}, computing the algebra basis on \ref{step:algmain:2} had often been a clear bottleneck on our dataset. 
With the modifications to the \Cref{alg:algebra-basis} as described in \Cref{sec:generating-algebra}, currently, the most time-consuming steps are the restriction and quotienting procedures applied on \ref{step:algmain:3} of~\Cref{alg:main}. 
Solving a number of linear systems to find the matrix representations of restricted and quotient operators is a clear bottleneck here.


\section{Case studies}\label{sec:examples}

\subsection{Inactivation of factor Va}

We will consider a model from~\cite{FactorV} which appears in the BioModels database~\cite{BioModels2020} as \texttt{BIOMD0000000365}.
Factor V is a protein involved in the process of coagulation (transforming blood from liquid to gel), and thus is closely related to blood vessel repair and thrombosis.
In particular, it can assist in activating anticoagulant protein C.
The activated factor V, factor Va, can no longer do this.
A model describing deactivation of Va by means of activated protein C (APC) was proposed and studied in~\cite{FactorV}.

Factor Va consists of the heavy chain (HC) and light chain (LC), and the binding of APC happens through the light chain.
The model consists of the following species
\begin{itemize}
    \item Factor Va and its versions Va${}_3$, Va${}_5$, Va${}_6$, Va${}_{53}$, Va${}_{56}$, Va${}_{36}$, and Va${}_{536}$;
    \item LC, HC, and the versions of the latter (HC$_{3}$, HC${}_5$, etc) corresponding to the versions of Va;
    \item the A${}_1$ domain of factor Va, Va${}_{LC\cdot A1}$ and versions of the A${}_2$ domain such as Va${}_{A3}$, Va${}_{A53}$, etc. 
    \item APC, complexes formed by it and LC/Va (such as APC$\cdot$Va${}_3$).
\end{itemize}

In total, the model contains 30 variables and 9 parameters, and the parameters are encoded as constant states as in~\Cref{rem:parameters}. Our code finds a maximal chain of lumpings of length 14 in under $5$ second on a laptop.
The smallest reduction with nonzero dynamics has dimension three and involves two parameters (similar to the one in~\Cref{ex:chain}):
\begin{equation}\label{eq:apc_reduction}
\begin{cases}
  y_1' = -k_1 y_1 y_2 + k_2 y_3,\\
  y_2' = -k_1 y_1 y_2 + k_2 y_3,\\
  y_3' = k_1 y_1 y_2 - k_2 y_3.
\end{cases}
\end{equation}
The macro-variables are
\begin{align*}
    y_1 &= [\operatorname{APC}],\\
    y_2 &= [\operatorname{LC}] + [\operatorname{Va}] + [\operatorname{Va}_3] + [\operatorname{Va}_{36}] + [\operatorname{Va}_5] + [\operatorname{Va}_{53}] + [\operatorname{Va}_{536}] + [\operatorname{Va}_{56}] + [\operatorname{Va}_{\operatorname{LC}\cdot A_1}],\\
    y_3 &= [\operatorname{LC}\cdot \operatorname{APC}] + [\operatorname{Va} \cdot \operatorname{APC}] + [\operatorname{Va}_3 \cdot \operatorname{APC}] +
    [\operatorname{Va}_{36} \cdot \operatorname{APC}] + [\operatorname{Va}_5 \cdot \operatorname{APC}]\\ &+ [\operatorname{Va}_{53} \cdot \operatorname{APC}] + [\operatorname{Va}_{536} \cdot \operatorname{APC}] + [\operatorname{Va}_{56} \cdot \operatorname{APC}] + [\operatorname{Va}_{\operatorname{LC} \cdot A_1} \cdot \operatorname{APC}].
\end{align*}
Variable $y_2$ (resp., $y_3$) can be described as the total concentration of the light chains without (resp., with) bound APC.
Therefore, the reduction~\eqref{eq:apc_reduction} focuses on the process of binding/unbinding of APC to the light chains, and it turns out that the other processes such as reactions between the heavy and light chains become irrelevant and, in particular, the HC${}_n$ species do not appear in the macro-variables at all.

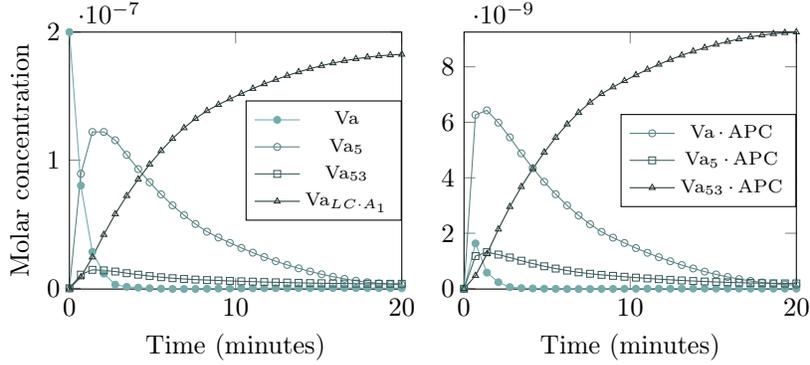
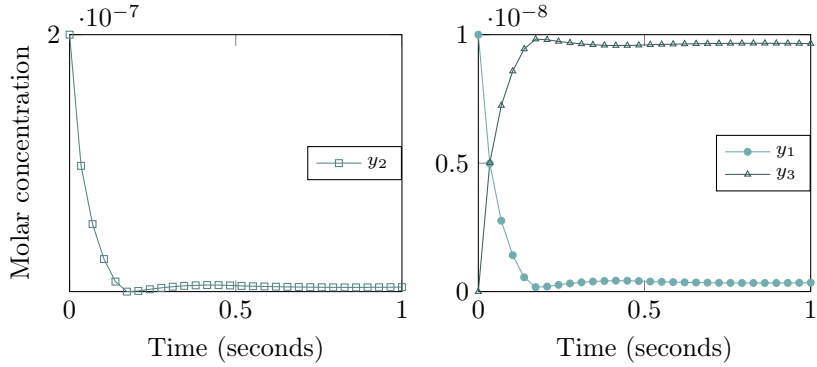
\begin{figure}[H]
\tikzset{every mark/.append style={scale=0.65}}
     \centering
     \begin{subfigure}[b]{1\textwidth}
         \centering
    \begin{tikzpicture}
\begin{axis}[
legend entries={
$\operatorname{Va}$, 
$\operatorname{Va}_5$, 
$\operatorname{Va}_{53}$, 
$\operatorname{Va}_{LC\cdot A_1}$
},
legend style={
at={(1.0,0.5)},
anchor=east,
},
xlabel=Time (minutes),
ylabel=Molar concentration,
xtick={0, 600, 1200},
xticklabels={0, 10, 20},
ytick={0, 1e-7, 2e-7},
enlargelimits=false,
colormap name=tealy,
cycle list={[colors of colormap={100, 300, 500, 700}]},
]
\addplot+[mark=*] table[x=t, y=y] {plot_data/y2_Va.dat};
\addplot+[mark=o] table[x=t, y=y] {plot_data/y2_Va5.dat};
\addplot+[mark=square] table[x=t, y=y] {plot_data/y2_Va53.dat};
\addplot+[mark=triangle] table[x=t, y=y] {plot_data/y2_VaLCA1.dat};
\end{axis}
    \end{tikzpicture}
\begin{tikzpicture}
\begin{axis}[
legend entries={, 
$\operatorname{Va}\cdot \operatorname{APC}$, 
$\operatorname{Va}_5\cdot \operatorname{APC}$, 
$\operatorname{Va}_{53}\cdot \operatorname{APC}$, 
$\operatorname{Va}_{LC\cdot A_1}\cdot \operatorname{APC}$
},
legend style={
at={(1.0,0.5)},
anchor=east,
},
xlabel=Time (minutes),
xtick={0, 600, 1200},
xticklabels={0, 10, 20},
enlargelimits=false,
colormap name=tealy,
cycle list={[colors of colormap={100, 300, 500, 700}]},
]
\addplot+[mark=*] table[x=t, y=y] {plot_data/y3_Va_APC.dat};
\addplot+[mark=o] table[x=t, y=y] {plot_data/y3_Va5_APC.dat};
\addplot+[mark=square] table[x=t, y=y] {plot_data/y3_Va53_APC.dat};
\addplot+[mark=triangle] table[x=t, y=y] {plot_data/y3_VaLCA1_APC.dat};
\end{axis}
    \end{tikzpicture}
         \caption{Some of the states of the original model appearing in $y_2$ and $y_3$}
         \label{fig:apc_original}
     \end{subfigure}
\hfill

     \begin{subfigure}[b]{1\textwidth}
     \centering
            \begin{tikzpicture}
\begin{axis}[
legend entries={$y_2$},
legend style={
at={(1.0,0.5)},
anchor=east,
},
xlabel=Time (seconds),
ylabel=Molar concentration,
xtick={0, 0.5, 1.0},
ytick={0, 1e-7, 2e-7},
enlargelimits=false,
colormap name=tealy,
cycle list={[colors of colormap={300}]},
]
\addplot+[mark=square] table[x=t, y=y] {plot_data/y2.dat};
\end{axis}
    \end{tikzpicture}
\begin{tikzpicture}
\begin{axis}[
legend entries={$y_1$, $y_3$},
legend style={
at={(1.0,0.5)},
anchor=east,
},
xlabel=Time (seconds),
xtick={0, 0.5, 1.0},
ytick={0, 0.5e-8, 1e-8},
enlargelimits=false,
colormap name=tealy,
cycle list={[colors of colormap={100, 500}]},
]
\addplot+[mark=*] table[x=t, y=y] {plot_data/y1.dat};
\addplot+[mark=triangle] table[x=t, y=y] {plot_data/y3.dat};
\end{axis}
    \end{tikzpicture}
    \caption{States of the reduced model}
    \label{fig:apc_reduced}
     \end{subfigure}
        \caption{Numerical simulation for the model from~\cite{FactorV} and its reduction using the initial conditions and parameter values from~\cite{FactorV}}
        \label{fig:apc}
\end{figure}

From a numerical perspective\footnote{All numerical simulations in this paper have been done using \texttt{ModelingToolkit}~\cite{ma2021modelingtoolkit} and \texttt{DifferentialEquations.jl}~\cite{rackauckas2017differentialequations}}, the reduction~\eqref{eq:apc_reduction} can be 
interpreted as exact timescale separation since the dynamics of the macro-variables turns out to be transient compared to the dynamics of the original system.
More precisely, the original system was studied in~\cite{FactorV} and has nontrivial dynamics on the timespan of 1200 second.
In particular, this is the case for the variables contributing to the macro-variable $y_2$, see~\Cref{fig:apc_original}.
On the other hand, as \Cref{fig:apc_reduced} shows, the macro-variables $y_1, y_2, y_3$ have much faster dynamics and reach the steady state after less than one second.


\subsection{Model of cell death}\label{sec:ex_celldeath}

In this subsection, we consider a model designed in~\cite{celldeath} in order to study the sensitivity of the apoptosis (programmed cell death) to the TNF (tumor necrosis factor) stimulation.
The overall model involves 47 chemical species and numerous interactions between them schematically described in~\cite[Figure~1]{celldeath}.
Our code produces a maximal chain of lumpings of length 23 (16 out of them with nonconstant dynamics).

We will consider the nonconstant reduction of the smallest dimension.
It involves two proteins, A20 and FLIP, whose concentrations depend on the concentrations of the corresponding mRNAs, A20\_mRNA and FLIP\_mRNA.
The concentrations of these mRNAs are governed by the concentrations of nuclear NF-$\kappa$B (NFkB\_N).
The latter depends (directly or indirectly) on many other species including the aforementioned protein A20.

\begin{figure}
\centering
\begin{tikzpicture}
  \node[draw, text width=8em,align=center,fill=mygolden] (complex) at (0,0) {\footnotesize 42 other species,\\including the ligand-receptor complex};
  \node (nfkb) at (0, -2) {\small NFkB};
  \node (a20mrna) at (2.5, -2) {\small A20 mRNA};
  \node (a20) at (2.5, -1) {\small A20};
  \node (flipmrna) at (-2.5, -2) {\small FLIP mRNA};
 \node (flip) at (-2.5, -1) {\small FLIP};

\draw [arrowstyle] (nfkb) -- (a20mrna);
\draw [arrowstyle] (a20mrna) -- (a20);
\draw [arrowstyle] (nfkb) -- (flipmrna);
\draw [arrowstyle] (flipmrna) -- (flip);

\draw [arrowstyle, transform canvas={xshift=2pt}, shorten <=0.2em, shorten >=0.2em] (complex) -- (nfkb);
\draw [arrowstyle, transform canvas={xshift=-2pt}, shorten <=0.2em, shorten >=0.2em] (nfkb) -- (complex);

\draw [arrowstyle, shorten <=-0.3em, shorten >=0.2em] (a20) -- (complex);
\draw [arrowstyle, shorten <=-0.3em, shorten >=0.2em] (flip) -- (complex);

\end{tikzpicture}
\caption{The relevant chemical species and dependencies between them}
\label{fig:ex_celldeath2}
\end{figure}
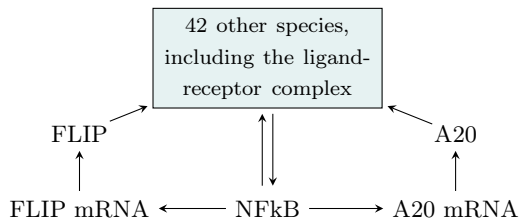

These species and relations between them are summarized on~\Cref{fig:ex_celldeath2}, and the corresponding differential equations are:
\begin{align*}
    &[A20]' = k_1 [A20\_mRNA] + k_2, & &[A20\_mRNA]' = k_5 [NF{\kappa}B\_N],\\
    &[FLIP]' = k_3 [FLIP\_mRNA] + k_4, & &[FLIP\_mRNA]' = k_6 [NF{\kappa}B\_N],
\end{align*}
where $k_1, \ldots, k_6$ are numeric parameters.
Our code finds a three-dimensional reduction which can be straightforwardly simplified further a two-dimensional with the following macro-variables $y_1, y_2$ and the reduced system:
\[
\begin{cases}
  y_1 = \frac{k_6}{k_1} [A20] - \frac{k_5}{k_3} [FLIP],\\
  y_2 = k_6 [A20\_mRNA] - k_5 [FLIP\_mRNA]
\end{cases}
\implies
\begin{cases}
  y_1' = y_2 + \frac{k_2k_6}{k_1} - \frac{k_4 k_5}{k_3},\\
  y_2' = 0
\end{cases}
\]
So the idea is that, although both A20 and FLIP are involved in a complex reaction network, one can, by eliminating the dependence on NF$\kappa$B, find a linear combination of them satisfying a simple system of differential equations which can be explicitly solved. 
Such explicit relations on the states can be, for example, combined with the differential inequalities method in order to obtain tighter reachability bounds~\cite{SCOTT201393}. 

By going further along the chain of the reductions one can include gradually more species into the reduced model, for example, a combination of the RIP protein and the transitional receptor can be included in a similar fashion.
%
%
\section{Conclusions}

We have presented a new algorithm which takes as input a system of ODEs and produces a longest possible chain of exact linear reductions of the system such that each reduction in the chain is a refinement of the previous one.
This specification is more general compared to the existing tools as it does not put any restriction on the new variables other than being the linear combinations of the original ones and it does not require any initial observable/guess.
We expect that our approach can be extended to the systems with the rational right-hand side using the ideas from~\cite{JimnezPastor2022}, we leave this for future research.

We provided a publicly available implementation in Julia.
Our code is able to analyze models of dimension over a hundred in a couple of minutes using commodity hardware.
We have also demonstrated its applicability to models arising in life sciences.
The performance can be further improved, for example, by first searching for linear first integrals (which appear frequently, for example, in chemical reaction networks) and using these constant reductions to skip some of the steps of our algorithm.

Since the produced reductions are exact, our tool can be used for formal verification and as a preprocessing for approximate reduction techniques.
While exactness is thus an important feature, it can also be viewed as a limitation since some models have only a few exact reductions (if any).
Therefore, one intriguing direction for future research is to produce a 
relaxed version of our algorithm to find approximate lumpings together with rigorous error bounds.
For existing results on approximate lumping and related approaches based on the singular perturbation theory, see~\cite{approx_lump1,approx_lump2,Soliman2014} and references therein.
Interestingly, the core linear algebraic problem of our algorithm, finding common invariant subspaces, has been recently studied from the perspective of approximate but rigorous computation in~\cite{factor1,factor2} motivated by factoring linear differential operators.
We expect the ideas from these papers to be useful in our context as well.

\section*{Acknowledgments}
We would like to thank David E. Speyer for his clear and detailed note~\cite{speyer}.
We would like to thank Mirco Tribastone for helpful discussions and Rongwei Yang for discussions about~\Cref{prop:chpoly}.
We would like to thank the referees for helpful feedback.
GP was supported by the Paris Ile-de-France region (project ``XOR'') and partially supported by NSF grants DMS-1853482, DMS-1760448, and DMS-1853650.
AD was supported by the Max Planck Institute for Informatics.

\bibliographystyle{elsarticle-num} 
\bibliography{mybibliography}

\end{document}